\newtheorem{theorem}{Theorem}[section]
\newtheorem{prop}[theorem]{Proposition}
\newtheorem{lemma}[theorem]{Lemma}
\newtheorem{corollary}[theorem]{Corollary}
\theoremstyle{definition}
\theoremstyle{remark}
\newcommand{\FO}{\ensuremath{\mathrm{FO}}}
\newcommand{\DA}{\ensuremath{\mathbf{DA}}}
\newcommand{\Ap}{\ensuremath{\mathbf{Ap}}}
\newcommand{\J}{\ensuremath{\mathbf{J}}}
\newcommand{\V}{\ensuremath{\mathbf{V}}}
\newcommand{\W}{\ensuremath{\mathbf{W}}}
\newcommand{\JI}{\ensuremath{\mathbf{J_1}}}
\newlength{\oldparindent}
\newlength{\indentwidth}
\newcommand{\negindent}[1]{%
   \settowidth{\indentwidth}{\textcolor{darkgray}{#1\ }\ }%
   \setlength{\oldparindent}{\parindent}%
   \setlength{\parindent}{-\indentwidth}%
   \textcolor{darkgray}{#1\ }%
   \setlength{\parindent}{\oldparindent}%
}
\newcommand{\longlongrightarrow}{\relbar\joinrel\longrightarrow}
\title{An effective characterization of the alternation hierarchy in two-variable logic}
\author{Andreas Krebs\thanks{email:\texttt{mail@krebs-net.de}}\and Howard Straubing\thanks{email:\texttt{straubin@cs.bc.edu}\newline\hspace*{0.4cm} Research partially supported by National Science Foundation Grant CCF-0915065}}
\begin{document}

\maketitle

\begin{abstract}
We characterize the languages in the individual levels of the quantifier alternation hierarchy of first-order logic with two variables by identities.
This implies decidability of the individual levels. More generally we show that the two-sided semidirect product of a decidable variety with the variety \J\ is decidable.
\end{abstract}

\section{Introduction}

It has been known for some time (Kamp~\cite{Ka68}, Immerman and Kozen~\cite{ImKo89}) that every first-order sentence over the base $<$ defining properties of finite words is equivalent to one containing only three variables. The fragment $\FO^2[<]$ of sentences that use only two variables, has been the object of intensive study;  Tesson and Th\'erien~\cite{TeTh02} give a broad-ranging survey of the many places in which the class of languages definable in this logic arises. Weis and Immerman~\cite{WeIm09} initiated the study of the hierarchy within $\FO^2[<]$ based on alternation of quantifiers.  They showed, using model-theoretic techniques, that the hierarchy is infinite, but finite for each fixed alphabet.

In~\cite{St11}, the second author provided an algebraic characterization of the levels of the hierarchy, showing that they correspond to the levels of weakly iterated two-sided semidirect products of the pseudovariety \J\ of finite ${\cal J}$-trivial monoids. This still left open the problem of {\it decidability} of the hierarchy: effectively determining from a description of a regular language the lowest level of the hierarchy to which the language belongs. This problem was apparently solved in Almeida-Weil~\cite{AlWe98}, from which explicit identities for the iterated product varieties can be extracted.  However,  an error in that paper called the correctness of these results into question.  Here we show that the given identities do indeed characterize these pseudovarieties. In particular, since it is possible to verify effectively whether a given finite monoid satisfies one of these identities, we obtain an effective procedure for exactly determining the alternation depth of a regular language definable in two-variable logic.

We show more generally that the two-sided semidirect product of a pseudovariety with \J\ as the right-hand factor preserve decidability.  That is, if we have an effective procedure for determining if a given finite monoid belongs to a variety \V, then we have such a procedure for $\V\mathbin{**}\J.$

At several junctures, our proof could have been shortened by appealing to known results about the algebra of finite categories and the topological theory of profinite monoids, which are the principal tools of~\cite{AlWe98}.  For example,  Theorem \ref{thm:localglobal} is really just the bonded component theorem of Tilson~\cite{Ti87} coupled with Simon's Theorem~\cite{Si75} on ${\cal J}$-trivial monoids.  Lemma \ref{lem:sublemma} closely mirrors  the  work of Almeida on the structure of the free profinite ${\cal J}$-trivial monoid~\cite{Al94}.  In order to keep our argument accessible and self-contained, we have chosen to steer clear of these quite technical results.  We do discuss finite categories, but only at the most elementary level.  Avoiding profinite techniques forces us to give explicit size bounds, but these are of independent interest in decidability questions.

We give the necessary preliminaries from algebra in Section~\ref{section:preliminaries}.  Section~\ref{section:categories} is devoted to our fundamental theorem, a category-based characterization of two-sided semidirect products with \J\ as the right-hand factor.  We apply this result in Section~\ref{section:identities} to obtain explicit identities for the levels of the hierarchy, thus solving the decidability problem.  We use these identities in Section~\ref{section:collapse} to give a new proof of the result of Weis and Immerman that the hierarchy collapses for each fixed input alphabet.  Section~\ref{section:decidability} proves the general decidability-preserving result for block products with \J.  %

After we circulated an early draft of this paper, we became aware of a number of related results.  Kufleitner and Weil~\cite{KuWe12}, building on earlier work of theirs~\cite{KuWe09}, independently established the decidability of the levels of the alternation hierarchy, using an entirely different algebraic characterization.  A proof that $\V\mathbin{**}\J$ is 
decidable if $\V$ appears in the unpublished Ph.D. thesis of Steinberg~\cite{St98}.

\section{Preliminaries}\label{section:preliminaries}
While the principal application of our results is in finite model theory, this paper contains no formal logic {\it per se} and is entirely algebraic in content.  The reader should consult~\cite{St11} and~\cite{WeIm09} for the definition of $\FO^2[<]$ and the alternation hierarchy within it.  For our purposes here, they are to be viewed simply as the language classes corresponding to certain varieties of finite monoids, as discussed below.

\subsection{Finite monoids and regular languages}

See the book by Pin~\cite{Pi86} for a detailed treatment of the matters discussed in this subsection and the next; here we give a brief informal review.

A {\it monoid} is a set $M$ together with an associative operation for which there is an identity element $1\in M.$  If $A$ is a finite alphabet, then $A^*$ is a monoid with concatenation of words as the multiplication.  $A^*$ is the {\it free monoid} on $A$: this means that every map $\alpha:A\to M,$ where $M$ is a monoid, extends in a unique fashion to a homomorphism from $A^*$ into $M.$

Apart from free monoids, all the monoids we consider in this paper are finite.  If $M$ is a finite monoid, then for every element $m\in M$ there is a unique $e\in\{m^k:k>1\}$ that is {\it idempotent,} {\it i.e.,} $e^2=e.$  We denote this element $m^{\omega}.$  

If $M, N$ are monoids then we say $M$ {\it divides} $N,$ and write $M\prec N,$ if $M$ is a homomorphic image of a submonoid of $N.$

We are interested in monoids because of their connection with automata and regular languages:  A {\it congruence} on $A^*$ is an equivalence relation $\sim$ on $A^*$ such that $u_1\sim u_2,$ $v_1\sim v_2,$ implies $u_1v_1\sim u_2v_2.$  The classes of $\sim$ then form a monoid $M=A^*/\sim,$ and the map $u\mapsto [u]_{\sim}$ sending each word to its congruence class is a homomorphism.  If $L\subseteq A^*,$ then $\equiv_L,$ the {\it syntactic congruence} of $L,$ is the coarsest congruence for which $L$ is a union of congruence classes.  The quotient monoid $A^*/\equiv_L$ is called the {\it syntactic monoid} of $L$ and is denoted $M(L).$

We say that a monoid $M$ {\it recognizes} a language $L\subseteq A^*$ if there is a homomorphism $\alpha:A^*\to M$ and a subset $X$ of $M$ such that $\alpha^{-1}(X)=L.$  The following proposition gives the fundamental properties linking automata to finite monoids.

\begin{prop}\label{prop:syntactic}
A language $L\subseteq A^*$ is a regular if and only if $M(L)$ is finite.
A monoid $M$ recognizes $L$ if and only if $M(L)\prec M.$
\end{prop}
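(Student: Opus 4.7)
The plan is to handle the two ``if and only if'' assertions in turn. For the first, I want to show regularity of $L$ is equivalent to finiteness of $M(L)$ by going back and forth through the observation that a homomorphism $\alpha\colon A^*\to M$ together with a subset $X\subseteq M$ with $\alpha^{-1}(X)=L$ is essentially the same data as a deterministic automaton recognizing $L$. For the second, I want to exploit the universal property of $\equiv_L$ (it is the \emph{coarsest} congruence saturating $L$) together with the freeness of $A^*$.

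First I would take $L$ regular, fix a DFA accepting it, and form its transition monoid $T$ (where two words are identified if they induce the same function on the state set). Then $T$ is finite, and the natural map $\alpha\colon A^*\to T$ together with the set $X\subseteq T$ of transformations sending the initial state to an accepting state recognizes $L$. The kernel of $\alpha$ is then a congruence on $A^*$ whose classes partition $A^*$ compatibly with $L$, so by the defining universal property of $\equiv_L$, the kernel of $\alpha$ refines $\equiv_L$; this yields a surjection $\alpha(A^*)\twoheadrightarrow M(L)$, proving $M(L)$ is finite. Conversely, if $M(L)$ is finite I would build a DFA with state set $M(L)$, initial state $1$, transition $(m,a)\mapsto m\cdot\eta(a)$ where $\eta\colon A^*\to M(L)$ is the syntactic morphism, and accepting set equal to the image of $L$ in $M(L)$ (well-defined because $L$ is a union of $\equiv_L$-classes). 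An easy induction on word length shows this DFA accepts $L$, so $L$ is regular.

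For the second ``iff'', assume $M$ recognizes $L$ via $\alpha\colon A^*\to M$ and $X\subseteq M$. The same argument as above shows the kernel of $\alpha$ is a congruence saturating $L$, hence refines $\equiv_L$, giving a surjective homomorphism $\alpha(A^*)\twoheadrightarrow M(L)$. Since $\alpha(A^*)$ is a submonoid of $M$, this exhibits $M(L)\prec M$. For the converse, suppose $M(L)\prec M$ via a submonoid $N\le M$ and a surjection $\pi\colon N\twoheadrightarrow M(L)$. Here I would use freeness of $A^*$: for each $a\in A$ pick some $\beta(a)\in\pi^{-1}(\eta(a))\subseteq N$; this choice extends uniquely to a homomorphism $\beta\colon A^*\to N\subseteq M$, and since $\pi\circ\beta$ agrees with $\eta$ on the generators $A$, the two homomorphisms are equal on all of $A^*$. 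Taking $X:=\pi^{-1}(Y)\subseteq M$, where $Y\subseteq M(L)$ is the set of $\equiv_L$-classes contained in $L$, we get $\beta^{-1}(X)=\eta^{-1}(Y)=L$, so $M$ recognizes $L$.

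I do not anticipate a serious obstacle: the proof is essentially bookkeeping around the universal properties involved. The most delicate point is the lifting argument in the last paragraph, where one must remember that \emph{freeness} of $A^*$ (not merely existence of the surjection $\pi$) is what allows an arbitrary set-theoretic choice of preimages of the $\eta(a)$'s to be promoted to a homomorphism; without this observation the converse direction would fail for non-free source monoids.
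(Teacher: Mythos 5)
Your argument is correct and is the standard one; the paper itself states this proposition without proof as a preliminary, referring to Pin's book. Both directions of both equivalences check out, including the two points that actually need care: that the kernel of a recognizing homomorphism is a congruence saturating $L$ and hence refines $\equiv_L$ by the coarseness property, and that freeness of $A^*$ is what lets you lift $\eta$ through the surjection $\pi\colon N\twoheadrightarrow M(L)$ by choosing preimages of the generators.
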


\subsection{Varieties and identities}

A collection \V\ of finite monoids closed under finite direct products and division is called a {\it pseudovariety} of finite monoids.  (The prefix `pseudo' is there because of the restriction to finite products, as the standard use of `variety' in universal algebra does not carry this restriction.)

Given a pseudovariety \V, we consider for each finite alphabet $A$ the set $A^*\mathcal{V}$ of regular languages $L\subseteq A^*$ such that $M(L)\in\V.$  We call $\mathcal{V}$ the {\it variety of languages} corresponding to the pseudovariety \V.  The correspondence $\V\mapsto\mathcal{V}$ is one-to-one, a consequence of the fact that every pseudovariety is generated by the syntactic monoids it contains. We are interested in this correspondence because of its connection with decidability problems for classes of regular languages:  To test whether a given language $L$ belongs to $A^*\mathcal{V},$ we compute its syntactic monoid $M(L)$ and test whether $M(L)\in\V.$  Since the multiplication table of the syntactic monoid can be effectively computed from any automaton representation of $L,$ decidability for the classes $A^*\mathcal{V}$ reduces to determining whether a given finite monoid belongs to \V.

Let $\Xi$ be the countable alphabet $X=\{x_1,x_2,\ldots\}.$  A {\it term} over $\Xi$  is built from the letters by concatenation and application of a unary operation $v\mapsto v^{\omega}.$  For example, $(x_1x_2)^{\omega}x_1$ is a term.  We will interpret these terms in finite monoids in the obvious way, by considering a valuation $\psi:\Xi\to M$ and extending it to terms by giving concatenation and the $\omega$ operator their usual meaning in $M.$  For this reason, we do not distinguish between $(uv)w$ and $u(vw),$ where $u,v$ and $w$ are themselves terms,nor between terms $u^{\omega}$ and $(u^{\omega})^{\omega},$ as these will be equivalent under every valuation.

An {\it identity} is a formal equation $u=v,$ where $u$ and $v$ are terms.  We say that a monoid $M$ {\it satisfies} the identity, and write $M\models(u=v),$ if $u$ and $v$ are equal under every valuation into $M.$   The family of all finite monoids satisfying a given set of identities is a pseudovariety, and we say that the pseudovariety is {\it defined} by the set of identities. We must stress that the identities we consider here are very special instances of a much more general class of {\it pseudoidentities}.  Under this broader definition, every pseudovariety is defined by a set of pseudoidentities. See, for instance, Almeida~\cite{Al94}.  If a pseudovariety \V\ is defined by a {\it finite} set of identities of the form we described, then membership of a given finite monoid $M$ in \V\ is decidable, since we only need substitute elements of \V\ for the variables in the identities in every way possible, and check that equality holds in each case. 

We consider four particular pseudovarieties that will be of importance in this paper.  (In presenting identities we will relax the formal requirement that all terms are over the alphabet $\{x_1,x_2,\ldots\},$ and use a larger assortment of letters for the variables.)

\medskip

\leftskip=1cm
\negindent{\Ap} The pseudovariety \Ap\ consists of the {\it aperiodic} finite monoids, those that contain no nontrivial groups.  It is defined by the identity $x^{\omega}=xx^{\omega}.$  If $A$ is a finite alphabet and $L\subseteq A^*$ is a regular language, then $M(L)\in\Ap$ if and only if $L$ is definable by a first-order sentence over $<.$ In other words, the first-order definable languages form the variety of languages corresponding to {\bf Ap.}

\smallskip

\negindent{\DA}
The pseudovariety \DA\ is defined by the pair of identities
$$(xyz)^{\omega}y(xyz)^{\omega}=(xyz)^{\omega}.$$ 
There are many equivalent characterizations of this pseudovariety in terms of other identities, the ideal structure of the monoids, and logic.  For us the most important ones are these:  First, \DA\ is also defined by the identities
$$(xy)^{\omega}(yx)^{\omega}(xy)^{\omega}=(xy)^\omega, x^{\omega}=xx^{\omega}.$$
Second, let $e\in M$ be idempotent, and let $M_e$ be the submonoid of $M$ generated by the elements $m\in M$ for which $e\in MmM.$  Then $M\in\DA$ if and only if $e=eM_ee$ for all idempotents $e$ of $M.$  Finally,  if $L\subseteq A^*$ is a regular language, then $M(L)\in\DA$ if and only if $L$ is definable in $\FO^2[<].$ In other words, the two-variable definable languages form the variety of languages corresponding to \DA.

\smallskip

\negindent{\J}
The pseudovariety $\J$ consists of finite monoids that satisfy the pair of identities
$$(xy)^{\omega}=(yx)^{\omega}, x^{\omega}=xx^{\omega}.$$

This is equivalent to the identities
$$(xy)^{\omega}x = y(xy)^{\omega}= (xy)^{\omega}.$$ 
Alternatively, \J\ consists of finite monoids $M$ such that for all $s,t\in M,$ $MsM=MtM$ implies $s=t.$  Such monoids are said to be ${\cal J}$-trivial.

A theorem due to I. Simon~\cite{Si75} describes the regular languages whose syntactic monoids are in \J. Let $w\in A^*.$   We say that $v=a_1\cdots a_k,$ where each $a_i\in A,$ is a {\it subword} of $w$ if
$$w=w_0a_1w_1\cdots a_kw_k$$
for some $w_i\in A.$   We define an equivalence relation $\sim_k$ on $A^*$ that identifies two words if and only if they contain the same subwords of length no more than $k.$  (In particular, $w_1\sim_1 w_2$ if and only if $w_1$ and $w_2$ contain the same set of letters.)  Simon's theorem is:
\begin{theorem}\label{theorem.simon}
\leftskip=1cm
Let $\phi:A^*\to M$ a homomorphism onto a finite monoid.  Then the following are equivalent: 
\begin{itemize}
\leftskip=1cm
\item $M \in\J.$
\item There exists $k\geq 1$ such that if $w\sim_k w',$ then $\phi(w)=\phi(w').$ (In particular, $M$ is a quotient of $A^*/\sim_k.$)
\end{itemize}
\end{theorem}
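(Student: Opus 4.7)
The plan is to prove the two implications separately. The direction $(2) \Rightarrow (1)$ is straightforward: since $\mathbf{J}$ is closed under division and the hypothesis says $\phi$ factors through the quotient $A^*/{\sim_k}$ (so $M \prec A^*/{\sim_k}$), it suffices to verify that $A^*/{\sim_k}$ is itself $\mathcal{J}$-trivial. For this, suppose $[u]\,\mathcal{J}\,[v]$ in $A^*/{\sim_k}$, so that there are words $x_1,y_1,x_2,y_2$ with $u \sim_k x_1vy_1$ and $v \sim_k x_2uy_2$. Any subword of $v$ of length at most $k$ is automatically a subword of $x_1vy_1$, and hence of $u$; by symmetry the sets of $k$-subwords of $u$ and $v$ coincide, so $u \sim_k v$ and $[u]=[v]$. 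Since $A^*/{\sim_k}$ is also finite (there are only finitely many distinct sets of subwords of bounded length over the finite alphabet $A$), the argument is complete.

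For the harder direction $(1) \Rightarrow (2)$, the plan is to exhibit an explicit $k$, say $k = |M|$, and prove $w \sim_k w' \Rightarrow \phi(w) = \phi(w')$ by induction on the $\mathcal{J}$-depth of $\phi(w)$. Here I use that $\mathcal{J}$-triviality makes $s \leq_{\mathcal{J}} t \Leftrightarrow s \in MtM$ an antisymmetric partial order with maximum $1$, so every element has a finite depth bounded by $|M|-1$. The base case $\phi(w)=1$ uses the fact that in a $\mathcal{J}$-trivial monoid, $xy=1$ forces $x=y=1$ (by antisymmetry of $\leq_{\mathcal{J}}$): thus $\phi(w)=1$ iff every letter of $w$ lies in $\phi^{-1}(1)$, a condition already visible from the $1$-subwords of $w$.

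For the inductive step, given $\phi(w) = m \neq 1$, the idea is to isolate a canonical factorization $w = uav$ in which the letter $a$ and its position are determined by the set of $k$-subwords of $w$, and such that $\phi(u)$ and $\phi(v)$ both lie strictly $\mathcal{J}$-below $m$. A natural choice is to take $a$ as the first letter of $w$ whose image drops $\phi$ of the corresponding prefix into a strictly lower $\mathcal{J}$-class, with $u$ the prefix just before $a$ and $v$ the remaining suffix. The $(k-1)$-subwords of $u$ and $v$ can be read off from the $k$-subwords of $w$, so the inductive hypothesis fixes $\phi(u)$ and $\phi(v)$, and hence $\phi(w) = \phi(u)\phi(a)\phi(v)$. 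The main obstacle, and the combinatorial crux of Simon's original argument, is to make the selection of $a$ and the split point depend \emph{only} on the set of $k$-subwords of $w$, not on $w$ itself, so that any $w' \sim_k w$ is forced into the same factorization scheme and the inductive conclusion carries over.
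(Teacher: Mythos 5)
Your easy direction ($(2)\Rightarrow(1)$) is correct and complete: $A^*/\sim_k$ is finite and ${\mathcal J}$-trivial by the subword argument you give, and \J\ is closed under division. This matches the paper's remark that this implication is easy; note, however, that the paper does not prove the theorem at all --- it cites it as Simon's theorem \cite{Si75}, explicitly flagging the converse as ``the deep content.''

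For that converse, your proposal has a genuine gap, and you name it yourself: you never establish that the factorization $w=uav$ is forced by the $\sim_k$-class of $w$. This is not a routine detail to be filled in later --- it \emph{is} the theorem. Two specific problems. First, your choice of split point (``the first letter whose image drops the prefix into a strictly lower ${\mathcal J}$-class'') is defined in terms of $\phi$ applied to prefixes of the particular word $w$, which is exactly the data you are trying to show is $\sim_k$-invariant; there is no a priori reason the corresponding position in a word $w'\sim_k w$ produces factors $u',v'$ with $u\sim_{k-1}u'$ and $v\sim_{k-1}v'$, and the blanket claim that ``the $(k-1)$-subwords of $u$ and $v$ can be read off from the $k$-subwords of $w$'' is false for a general factorization. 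Second, the induction is structurally off: you fix $k=|M|$ but then invoke the inductive hypothesis on the factors at parameter $k-1$, so you would need a double induction (on depth and on $k$ simultaneously) or a parameter $k$ that grows with $|M|$; as written the base and step do not chain. Completing the argument requires a substantive combinatorial lemma relating $\sim_k$-classes of words to $\sim_{k'}$-classes of canonically chosen factors (Simon's original interpolation/factorization lemma, or an equivalent such as the identities $(xy)^{\omega}x=(xy)^{\omega}=y(xy)^{\omega}$ holding in $A^*/\sim_k$ for suitable exponents). Until that lemma is stated and proved, the hard direction remains an outline rather than a proof.
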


It is easy to show that the second condition implies the first; the deep content of the theorem is the converse implication.  The theorem can also be formulated in first-order logic: The variety of languages corresponding to \J\ consists of languages definable by boolean combinations of $\Sigma_1$ sentences over $<$.

\smallskip

\negindent{\JI} 
The pseudovariety $\JI$ consists of all idempotent and commutative monoids; {\it i.e.,} those finite monoids that satisfy the identities $x^2=x, xy=yx.$  A language $L\subseteq A^*$ is in the variety of languages corresponding to $\JI$ if and only if it is a union of $\sim_1$-classes.
It is well known, and easy to show, that
$$\JI\subseteq \J\subseteq\DA\subseteq \Ap,$$
and all the inclusions are proper.

\leftskip=0cm

\subsection{Two-sided Semidirect Products}\label{section.twosidedproducts}

In this section we describe an operation on  pseudovarieties of finite monoids, the {\it two-sided semidirect product.}  This was given its formal description by Rhodes and Tilson~\cite{RhTi89}, but it has precursors in automata theory in the work of Sch\"utzenberger on sequential bimachines~\cite{Sc61}, Krohn, Mateosian and Rhodes~\cite{KrMaRh67}, and Eilenberg on triple products~\cite{Ei76}.  Traditionally, one begins with a two-sided semidirect product operation on monoids, and then uses this to define the corresponding operation on pseudovarieties.  Here we find it simpler to define the operation on varieties directly.

Let $A$ be a finite alphabet, and $\psi:A^*\to N$ a homomorphism into a finite monoid.  Let $\Sigma=N\times A\times N,$ which we treat as a new finite alphabet.  We define a length-preserving transduction (not a homomorphism) $\tau_{\psi}:A^*\to \Sigma^*$ by
$$\tau_{\psi}(a_1\cdots a_n)=\sigma_1\cdots \sigma_n\text{, where}$$
$$\sigma_i=(\psi(a_1\cdots a_{i-1}),a_i,\psi(a_{i+1}\cdots a_k))\in\Sigma.$$
(If $i=1,$  we interpret the right-hand side as $(1, a_1,\psi(a_2\cdots a_k)),$ and similarly if $i=n.$)

Let \V\ and \W\ be pseudovarieties of finite monoids.  Let $M$ be a finite monoid, and let $\phi:A^*\to M$ be a surjective homomorphism.  We say that $M\in\V\mathbin{**}\W$ if and only if there exist homomorphisms
$$\psi:A^*\to N\in \W,$$
$$h: (N\times A\times N)^*\to K\in\V,$$
such that  $\phi$ factors through $(h\circ \tau_{\psi},\psi)$---in other words, for all $v,w\in A^*,$ if $\psi(v)=\psi(w)$ and $h(\tau_{\psi}(v))=h(\tau_{\psi}(w)),$ then $\phi(v)=\phi(w).$  It is not difficult to check that this is independent of the alphabet $A$ and the homomorphism $\phi,$ and is thus determined entirely by $M,$ and that furthermore $\V\mathbin{**}\W$ forms a pseudovariety of finite monoids.  We will treat this as the definition of $\V\mathbin{**}\W,$ but it is also straightforward to verify that this coincides with the pseudovariety generated by two-sided semidirect products $K\mathbin{**}N,$ where $K\in\V$ and $N\in\W.$

We define a sequence $\{\V_i\}_{i\geq 1}$ of pseudovarieties by setting $\V_1=\J,$ and, for $i\geq 1,$ $\V_{i+1}=\V_i\mathbin{**}\J.$  The main result of~\cite{St11} is that \DA\ is the union of the pseudovarieties $\V_i,$ and that the variety of languages corresponding to $\V_i$ is the $i^{\mathrm{th}}$ level of the alternation hierarchy within $\FO^2[<].$

\subsection{Finite categories}
We give a brief account of the tools from the algebraic theory of finite categories needed to prove our main results.  The  original papers of Tilson~\cite{Ti87} and Rhodes and Tilson~\cite{RhTi89} give a complete and careful exposition of the general theory. 

The categories studied in category theory are typically big categories, in which the object class consists of something like all topological spaces, and the arrows are all continuous functions.
 The work of Tilson~\cite{Ti87} showed the utility of studying very small categories in which the object set,  as well as each set of arrows between two objects, is finite.  

A category $\mathcal C$ consists of a set of {\em objects} $\mathrm{obj}(\mathcal C)$, a set of {\em arrows} $\mathrm{hom}(A,B)$ from $A$ to $B$ for all $A,B\in\mathrm{obj}(\mathcal C)$, and a associative partial binary operations $\circ:\mathrm{hom}(A,B)\times\mathrm{hom}(B,C)\rightarrow\mathrm{hom}(A,C)$ for all $A,B,C\in\mathrm{obj}(\mathcal{C})$ called {\em composition}, such that there is an identity in $\mathrm{hom}(A,B)$ for all $A,B\in\mathrm{obj}(\mathcal C)$.

In this view, a finite monoid is simply a category with a single object, and a finite category is consequently a generalized finite monoid.  

Let $A$ be a finite alphabet, $M$ and $N$ finite monoids with homomorphisms 
$$M\stackrel{\phi}{\longleftarrow} A^* \stackrel{\psi}{\longrightarrow} N,$$
where $\phi$ maps onto $M.$ 
We will define a finite category, which we call the {\it kernel category} $\ker({ \psi\circ\phi^{-1}}).$ The {\it objects} of $\ker({\psi\circ\phi^{-1}})$ are pairs $(n_1,n_2)\in N\times N.$\footnote{The odd notation for the kernel  category is used to maintain consistency with the traditional setting for these finite categories. $\psi\circ\phi^{-1}$ is a {\it relational morphism} from $M$ to $N,$ and Tilson defines these categories for arbitrary relational morphisms, not just those derived from morphisms of the free monoid.}

The {\it arrows} are represented by triples
$$(n_1,n_2)\stackrel {u}{\rightarrow} (n_1',n_2'),$$
where $u\in A^*,$ $n_1'=n_1\cdot \psi(u)$ and $\psi(u)\cdot n_2'=n_2.$  Whenever we have a pair of consecutive arrows
$$(n_1,n_2)\stackrel {u}{\rightarrow} (n_1',n_2'), (n_1',n_2')\stackrel {v}{\rightarrow} (n_1'',n_2''),$$
then we can define the product arrow
$$(n_1,n_2)\stackrel {uv}{\longrightarrow} (n_1'',n_2'').$$
If this were all there were to arrows in the kernel category, we would in general have an infinite set of arrows between two objects.  However, we identify two coterminal arrows 
$$(n_1,n_2)\stackrel {u, u'}{\longlongrightarrow} (n_1',n_2')$$
if for all $v,w\in A^*$ with $\psi(v)=n_1,$ $\psi(w)=n_2',$
$$\phi(vuw)=\phi(vu'w).$$
It is easy to check that this identification is compatible with the product on consecutive arrows, so the true arrows of $\ker({\psi\circ\phi^{-1}})$ are equivalence classes modulo this identification. In particular, the finiteness of $M$ and $N$ implies that there are only finitely many distinct arrows

If $(n_1,n_2)=(n_1',n_2'),$ then any pair of arrows from $(n_1,n_2)$ to itself are consecutive, and thus the set of all such arrows at $(n_1,n_2)$ is a finite monoid, which we denote $M_{n_1,n_2}.$  This is a {\it base monoid.}  Base monoids, then, are just built from words $u$ satisfying $n_1\cdot \psi(u)=n_1,$ and $\psi(u)\cdot n_2=n_2,$ and collapsing modulo the equivalence relation identifying arrows.  

The following Lemma concerning the structure of the base monoids will be quite useful.

\begin{lemma}\label{lemma:basemonoids}
Let $A$ be a finite alphabet: $M, N, N'$ finite monoids, and consider homomorphisms
$$M\stackrel{\phi}{\longleftarrow} A^* \stackrel{\psi}{\longrightarrow} N\stackrel{\psi'}{\longrightarrow}N',$$
where $\phi$ maps onto $M.$  Then every base monoid of $\ker({\psi\circ\phi^{-1}})$ divides some base monoid of $\ker({(\psi'\psi)\circ\phi^{-1}}).$
\end{lemma}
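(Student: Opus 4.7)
The plan is to exhibit, for each base monoid $M_{n_1,n_2}$ of $\ker(\psi\circ\phi^{-1})$, an explicit base monoid of $\ker((\psi'\psi)\circ\phi^{-1})$ together with a submonoid of it that maps onto $M_{n_1,n_2}$. The natural candidate is obtained by pushing the pair of states forward through $\psi'$: take $(n_1',n_2')=(\psi'(n_1),\psi'(n_2))$.

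First I would observe that any word $u$ representing an arrow at $(n_1,n_2)$ in $\ker(\psi\circ\phi^{-1})$, meaning $n_1\psi(u)=n_1$ and $\psi(u)n_2=n_2$, satisfies the analogous conditions with $\psi'\psi$ in place of $\psi$ and $\psi'(n_i)$ in place of $n_i$: just apply $\psi'$ to both equations. Let $W$ be the set of all such words at $(n_1,n_2)$. Then $W$ is contained in the set of words representing arrows at $(\psi'(n_1),\psi'(n_2))$ in the finer kernel category, and $W$ is closed under concatenation and contains the empty word (both checks are immediate from $n_1\psi(u_1u_2)=n_1\psi(u_1)\psi(u_2)=n_1\psi(u_2)=n_1$ and symmetrically on the right). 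Hence the image $T$ of $W$ inside $M'_{\psi'(n_1),\psi'(n_2)}$ is a submonoid.

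Next I would produce a surjective homomorphism $\pi\colon T\to M_{n_1,n_2}$ by sending the $\ker((\psi'\psi)\circ\phi^{-1})$-class of $u\in W$ to its $\ker(\psi\circ\phi^{-1})$-class. Surjectivity and the homomorphism property are automatic once well-definedness is established. For well-definedness, the crucial observation is that the equivalence collapsing arrows in $\ker((\psi'\psi)\circ\phi^{-1})$ is \emph{finer} than the one collapsing arrows in $\ker(\psi\circ\phi^{-1})$: the set of test pairs $(v,w)$ with $\psi(v)=n_1$, $\psi(w)=n_2$ is contained in the set of test pairs with $\psi'\psi(v)=\psi'(n_1)$, $\psi'\psi(w)=\psi'(n_2)$, so any identification made in the second category forces the corresponding identification in the first. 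Thus if $u,u'\in W$ represent the same arrow of $M'_{\psi'(n_1),\psi'(n_2)}$, they already represent the same arrow of $M_{n_1,n_2}$, and $\pi$ is well-defined.

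Combining these steps, $M_{n_1,n_2}$ is a homomorphic image of the submonoid $T$ of $M'_{\psi'(n_1),\psi'(n_2)}$, and therefore divides it. I do not anticipate a serious obstacle; the lemma is essentially an unpacking of definitions. The one point requiring care is the direction of the containment of test-pair sets: intuitively one might expect the ``larger'' category $\ker((\psi'\psi)\circ\phi^{-1})$ (coming from the coarser partition of $A^*$ into $N'$-classes) to yield coarser base monoids, but in fact it is the opposite, because coarser state information means stricter requirements on which arrows may be identified.
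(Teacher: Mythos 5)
Your proposal is correct and follows essentially the same route as the paper's proof: map $(n_1,n_2)$ to $(\psi'(n_1),\psi'(n_2))$, note that the submonoid $U$ of words stabilizing $(n_1,n_2)$ sits inside the corresponding submonoid for the image object, and check that equivalence of arrows in $\ker((\psi'\psi)\circ\phi^{-1})$ implies equivalence in $\ker(\psi\circ\phi^{-1})$ because the set of test pairs only grows. Your closing remark about the direction of the containment matches the observation the authors make immediately after the lemma.
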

\begin{proof}
Let $n_1,n_2\in N.$  We denote by $M_1$ the base monoid at $(n_1,n_2)$ in $\ker({\psi\circ\phi^{-1}}),$
and by $M_2$ the base monoid at $(n_1',n_2')=(\psi(n_1),\psi(n_2))$ in $\ker({(\psi'\psi)\circ\phi^{-1}}).$  
Set
$$U=\{u\in A^*: n_1\cdot\psi(u)=n_1, n_2=\psi(u)\cdot n_2\},$$
$$U'=\{u\in A^*: n_1'\cdot\psi'\psi(u)=n_1', n_2'=\psi'\psi(u)\cdot n_2'\}.$$
$U$ and $U'$ are submonoids of $A^*,$ and $U\subseteq U'.$ $M_1$ and $M_2$ are the quotients of $U$ and $U'$ by the congruences identifying equivalent arrows in the respective categories.  Let $u,u'\in U$ represent equivalent arrows of $M_2,$ and suppose $v,w\in A^*$ are such that $\psi(v)=n_1,$ $\psi(w)=n_2.$ Then $\psi'\psi(v)=n_1',$ $\psi'\psi(w)=n_2',$ so by equivalence in $M_2$ we have $\phi(vuw)=\phi(vu'w).$  But this means that $u$ and $u'$ represent equivalent arrows in $M_1,$ so $M_1$ is a quotient of the image of $U$ in $M_2.$  Thus $M_1\prec M_2.$
\end{proof}

It is worth keeping in mind the somewhat counterintuitive message of this lemma:  The category $\ker({\psi\circ\phi^{-1}})$ is {\it bigger} (it has more objects) than $\ker({(\psi'\psi)\circ\phi^{-1}})$ but its base monoids are {\it smaller}.

The reason for the construction of the kernel category is its relation to two-sided semidirect products.  Roughly speaking, $M\in\V\mathbin{**}\W$ if and only if there exists $\psi:A^*\to N\in\W$ such that 
the category $\ker (\psi\circ\phi^{-1})$ is `globally in \V'.  We will not define this precisely, but instead prove the consequence that if $M\in\V\mathbin{**}\W,$ then $\ker (\psi\circ\phi^{-1})$ satisfies a weaker condition of being `locally in \V.'

\begin{prop}\label{proposition:local}
Let $\phi:A^*\to M$ be a homomorphism mapping onto $M.$  If $M\in\V\mathbin{**}\W,$ then there is a homomorphism $\psi:A^*\to N\in \W$ such that each base monoid of $\ker (\psi\circ\phi^{-1})$  is in \V.
\end{prop}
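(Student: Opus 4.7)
The plan is to fix the witnesses $\psi : A^*\to N\in\W$ and $h:\Sigma^*\to K\in\V$ given by the definition of $\V\mathbin{**}\W$ (where $\Sigma = N\times A\times N$), and show directly that for this particular $\psi$, every base monoid $M_{n_1,n_2}$ of $\ker(\psi\circ\phi^{-1})$ divides $K$. Since $\V$ is closed under division, this gives $M_{n_1,n_2}\in\V$.

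For a fixed pair $(n_1,n_2)\in N\times N$, let $U=\{u\in A^*:n_1\psi(u)=n_1,\ \psi(u)n_2=n_2\}$, a submonoid of $A^*$ whose quotient by the base-monoid equivalence is $M_{n_1,n_2}$. The key construction is a ``localized transduction'' $\tau_{n_1,n_2}:U\to\Sigma^*$ defined on $u=b_1\cdots b_\ell\in U$ by letting the $j$-th letter be $(n_1\psi(b_1\cdots b_{j-1}),\, b_j,\, \psi(b_{j+1}\cdots b_\ell)n_2)$. First I would check that $\tau_{n_1,n_2}$ is a monoid homomorphism: concatenating $u,u'\in U$ and comparing letter by letter, the identities $\psi(u')n_2=n_2$ and $n_1\psi(u)=n_1$ are exactly what makes the left and right context values survive the concatenation. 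Hence $h\circ\tau_{n_1,n_2}$ is a homomorphism from $U$ into $K\in\V$.

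Next I would show that this homomorphism factors through $M_{n_1,n_2}$: if $u,u'\in U$ satisfy $h(\tau_{n_1,n_2}(u))=h(\tau_{n_1,n_2}(u'))$ then $u$ and $u'$ represent the same arrow. For any $v,w\in A^*$ with $\psi(v)=n_1$ and $\psi(w)=n_2$, a direct inspection of the definition of $\tau_\psi$ on $vuw$ gives a factorization $\tau_\psi(vuw)=\alpha\cdot\tau_{n_1,n_2}(u)\cdot\beta$, where the prefix $\alpha$ and the suffix $\beta$ depend only on $v,w,n_1,n_2$ (because $\psi(uw)=n_2$ and $\psi(vu)=n_1$). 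The same $\alpha,\beta$ work for $u'$, so $h(\tau_\psi(vuw))=h(\tau_\psi(vu'w))$, and trivially $\psi(vuw)=n_1n_2=\psi(vu'w)$. Invoking the factorization hypothesis on $\phi$ through $(h\circ\tau_\psi,\psi)$, we get $\phi(vuw)=\phi(vu'w)$, which is precisely the arrow-equivalence condition.

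Putting this together: the image of $U$ under $h\circ\tau_{n_1,n_2}$ is a submonoid of $K$, and the arrow-equivalence on $U$ is coarser than the kernel of this map; so $M_{n_1,n_2}$ is a quotient of a submonoid of $K$, i.e.\ $M_{n_1,n_2}\prec K\in\V$. The only delicate step is the routine but careful verification that $\tau_{n_1,n_2}$ is a homomorphism and that it sits inside $\tau_\psi$ in the required way; once those computations are in place the proposition follows immediately from the definition of $\V\mathbin{**}\W$.
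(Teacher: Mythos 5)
Your proposal is correct and is essentially the paper's own argument: the ``localized transduction'' $\tau_{n_1,n_2}$ is exactly the paper's ${}^{n_1}\tau_{\psi}(u)^{n_2}$, and the verification that it is a homomorphism on $U$, the factorization $\tau_{\psi}(vuw)=\alpha\cdot\tau_{n_1,n_2}(u)\cdot\beta$, and the conclusion $M_{n_1,n_2}\prec K$ all match the paper's proof step for step.
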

\begin{proof}
Since $M\in\V\mathbin{**}\W,$ there exist homomorphisms $\psi:A^*\to N\in \W$ and $h:\Sigma^*\to K\in\V,$ where $\Sigma=N\times A\times N$ satisfying the conditions in the definition of the product variety.  Let $n_1,n_2\in N,$ and let $M'$ denote the base monoid at $(n_1,n_2).$ As in the proof of Lemma~\ref{lemma:basemonoids}, we set
$$U=\{u\in A^*: n_1\cdot\psi(u)=n_1, n_2=\psi(u)\cdot n_2\}.$$
If $\sigma=(n,a,n')\in\Sigma,$ we set 
$$\sigma^{n_2}=(n,a,n'n_2)\in\Sigma,$$
$$^{n_1}\sigma=(n_1n,a,n')\in\Sigma,$$
$$ ^{n_1}\sigma^{n_2}=(n_1n,a,n'n_2)\in\Sigma.$$

 We extend these actions to $\Sigma^*$:  If $z=\sigma_1\cdots \sigma_n\in\Sigma^*,$ then $ z^{n_2}={\sigma_1}^{n_2}\cdots{\sigma_n}^{n_2},$ and similarly for the other two operations.
For $u\in U$ we set $\alpha(u)=h(^{n_1}\tau_{\psi}(u)^{n_2}).$  We then have $\alpha(uu')=\alpha(u)\alpha(u')$ for $u,u'\in U.$  (Observe that this property does {\it not} hold for arbitrary $u,u'\in A^*,$ but depends on the fact that words in $U$ stabilize $n_1$ on the right and $n_2$ on the left.)  Thus $\alpha$ is a homomorphism from $U$ into $K.$  

Suppose now that $\alpha(u)=\alpha(u').$  Let $v,w\in A^*$ with $\psi(v)=n_1,$ $\psi(w)=n_2.$ Then
$$h(\tau_{\psi}(vuw))=h(\tau_{\psi}(v)^{n_2})\cdot \alpha(u)\cdot h(^{n_1}\tau_{\psi}(w)),$$
and similarly
$$h(\tau_{\psi}(vuw))=h(\tau_{\psi}(v)^{n_2})\cdot \alpha(u')\cdot h(^{n_1}\tau_{\psi}(w)).$$
So $h(\tau_{\psi}(vuw))=h(\tau_{\psi}(vu'w)),$ and $\psi(vuw)=n_1n_2=\psi(vu'w),$ so $\phi(vuw)=\phi(vu'w).$ Consequently $u,u'$ represent the same element of the base monoid at $(n_1,n_2).$ Thus the map taking $u\in U$ to the corresponding element of $M'$ factors through the homomorphism $\alpha,$ so $M'\prec K,$ giving $M'\in\V.$ 
\end{proof}

\section{A local-global theorem for categories}\label{section:categories}

In general, the converse of Proposition~\ref{proposition:local} is false.  This section is devoted to establishing an important instance in which it is true, namely when $\W=\J.$ 

\begin{theorem}\label{thm:localglobal} Let $A$ be a finite alphabet, $M$ and $N$ finite monoids with $N\in\J$ and homomorphisms 
$$M\stackrel{\phi}{\longleftarrow} A^* \stackrel{\psi}{\longrightarrow} N.$$  Suppose \V\ is a pseudovariety of finite monoids with $\JI\subseteq \V.$  If every base monoid of $\ker({\psi\circ\phi^{-1}})$ is in \V, then $M\in \V\mathbin{**}\J.$
\end{theorem}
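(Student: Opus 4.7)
The plan is to build explicit homomorphisms $\psi': A^* \to N' \in \J$ and $h: (N' \times A \times N')^* \to K \in \V$ witnessing $M \in \V \mathbin{**} \J$, using only the given data and the base monoids of the kernel category.

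First I would put $\psi$ in a canonical form. By Simon's theorem, $N$ is a quotient of $A^*/\sim_k$ for some $k$, so $\psi$ factors through the quotient map $\eta: A^* \to A^*/\sim_k$. Lemma~\ref{lemma:basemonoids} applied to this factorization says that every base monoid of $\ker(\eta \circ \phi^{-1})$ divides some base monoid of $\ker(\psi \circ \phi^{-1})$; since $\V$ is closed under division, the base monoids of the refined kernel category also belong to $\V$. I can therefore replace $\psi$ by $\eta$ and work with $N = A^*/\sim_k$. The advantage is that the letter-local contexts $m_1, m_2 \in N$ appearing in letters of $\Sigma$ now transparently encode the $k$-subword data of the prefix and suffix.

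For the target monoid I would take $K = \prod_{(n_1,n_2)} M_{n_1,n_2}$, the direct product over all objects of the kernel category of their base monoids; each factor lies in $\V$ by hypothesis, so $K \in \V$ by closure under finite direct products. The crux is the definition of $h: \Sigma^* \to K$: for a letter $\sigma = (m_1, a, m_2) \in \Sigma$ I would define its $(n_1, n_2)$-component so as to record the contribution of $a$ to the base monoid $M_{n_1,n_2}$, but only when $\sigma$ lies ``inside a loop'' at $(n_1, n_2)$---namely, when the prefix represented by $m_1$ has already stabilized $n_1$ on the right and the suffix represented by $m_2$ will stabilize $n_2$ on the left. The hypothesis $\JI \subseteq \V$ is what allows each coordinate to carry the idempotent-commutative bookkeeping needed to encode the first and last occurrences of the loop-delimiting subword patterns, so that loop boundaries can be detected from letter-local data alone. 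Extended multiplicatively, the $(n_1,n_2)$-component of $h(\tau_\psi(w))$ becomes the product in $M_{n_1,n_2}$ of the classes of the maximal loop-factors of $w$ at $(n_1, n_2)$.

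Finally I would verify that if $\psi(v) = \psi(w)$ and $h(\tau_\psi(v)) = h(\tau_\psi(w))$, then $\phi(v) = \phi(w)$. Because $N \in \J$, the equality $\psi(v) = \psi(w)$ gives $v$ and $w$ a common ``skeleton'' of distinguished positions coming from first/last occurrences of subwords, and this skeleton decomposes both words as paths through the same sequence of objects of the kernel category. Agreement of the coordinates of $h$ forces the corresponding loop-factors to coincide in each base monoid, and composing these arrows along the common skeleton yields $\phi(v) = \phi(w)$. The principal obstacle is the construction of $h$: it must simultaneously be a genuine homomorphism, consistently identify loops at every object, and behave correctly where loops at different objects overlap or nest---this is where both $\JI \subseteq \V$ and the canonical form $N = A^*/\sim_k$ are essential.
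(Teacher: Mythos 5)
Your outline follows the paper's proof quite closely: reduce to $\psi=\psi_k\colon A^*\to A^*/\!\sim_k$ via Simon's theorem and Lemma~\ref{lemma:basemonoids} (your application of the lemma, and the observation that division preserves membership in $\V$, are correct), map $\Sigma^*$ into a product built from the base monoids by letting a letter $(m_1,a,m_2)$ act nontrivially only in the coordinate of the object it loops at, and finally convert agreement under $h$ into a step-by-step replacement of corresponding loop factors along a common skeleton. The first and last steps are sound.

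The gap is in the middle. You take $K=\prod_{(n_1,n_2)}M_{n_1,n_2}$ and assert in the verification that ``the equality $\psi(v)=\psi(w)$ gives $v$ and $w$ \dots the same sequence of objects of the kernel category.'' This is false: already for $k=1$ we have $ab\sim_1 ba$, yet $ab$ passes through the object $(\{a\},\{b\})$ and $ba$ through $(\{b\},\{a\})$, so there is no common skeleton. Worse, the coordinates $M_{n_1,n_2}$ cannot detect the discrepancy: for the syntactic morphism of $L=a^*b^*$ with $k=1$, \emph{every} base monoid of $\ker(\psi_1\circ\phi^{-1})$ is trivial (loops at $(P,S)$ must use only letters of $P\cap S$, and in the remaining cases the contexts kill all distinctions), so your $h$ is the trivial map and your criterion would force $\phi(ab)=\phi(ba)$ --- which fails, even though the theorem's hypotheses hold with $\V=\JI$. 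This is precisely where $\JI\subseteq\V$ must be used, and it must be used as an \emph{additional direct factor} of $K$: the paper adjoins $\mathcal{P}(N\times N)$ under union and the homomorphism $h_U\colon(m_1,a,m_2)\mapsto\{(m_1,m_2)\}$, and only from the equality of these sets, combined with the monotonicity $P\subseteq P'$, $S'\subseteq S$ along a path, does it deduce that the two words visit the same objects in the same order with the same connecting letters. Your suggestion that ``each coordinate'' carry this idempotent-commutative bookkeeping cannot work, since each $M_{n_1,n_2}$ is a fixed monoid supplied by the hypothesis and cannot be enlarged to store extra data. Once the $\mathcal{P}(N\times N)$ factor is added, the rest of your outline goes through as in the paper.
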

\begin{proof}

It follows from Theorem~\ref{theorem.simon} that for some $k>0,$ $\psi$ factors through the homomorphism $A^*\to A^*/\sim_k$ identifying two words that have the same subwords up to length $k.$ By Lemma~\ref{lemma:basemonoids} we may assume that $\psi$ {\it is} this homomorphism, and that $N=A^*/\sim_k.$  In particular, if $w\in A^*,$ then we can represent $\psi(w)$ as the set of subwords of $w$ of length no more than $k.$ (When we need to emphasize the dependence of $\psi$ on the chosen subword length, we will write it as $\psi_k.$)

The set ${\cal P}(N\times N)$ of subsets of $N\times N$ forms an idempotent and commutative monoid with union as the operation, and hence belongs to {\bf V.}  
Let $\Sigma= N\times A\times N$ and let $h_U$ be the homomorphism
$$h_U:\Sigma^*\to {\cal P}(N\times N)$$
$$\text{by }\sigma=(P,a,S)\mapsto\{(P,S)\}$$
for each $\sigma\in\Sigma.$  Given $P,S\in N,$ define a homomorphism
$$h_{P,S}:\Sigma^*\to M_{P,S}$$
by mapping $(P',a,S')\in\Sigma$ to the arrow class of $(P,a,S)$ if $P=P', S=S',$ and to $1\in M_{P,S}$ otherwise.  Finally, set $M'$ to be the direct product
$$M'= {\cal P}(N\times N)\times \prod_{(P,S)\in N\times N} M_{P,S},$$
and set
$$h=h_U \times \prod_{(P,S)\in N\times N}h_{P,S}.$$
By our hypothesis $M'\in \V.$

Let $w,w'\in A^*,$ with
$\psi(w)=\psi(w')$ and  $h(\tau_{\psi}(w))=h(\tau_{\psi}(w')).$ 
We will show $\phi(w)=\phi(w'),$
 which gives the result.

We will look at the paths through $ker (\psi\circ\phi^{-1})$ traced out by $w$ and $w'.$. Since $\psi(w)=\psi(w'),$  the two paths are coterminal, beginning at the object
$(1,\psi(w))$
and ending at
$(\psi(w),1).$
 If the $i^{\mathrm{th}}$ letter of $w$ is $a_i,$ then the $i^{\mathrm{th}}$ arrow on this path is the class of 
$$(P_{i-1},S_{i-1})\stackrel{a_i}{\longrightarrow}(P_i,S_i),$$
where $P_j$ is $\psi(u)$ for the prefix $a_1\cdots a_{j-1}$ of  $j-1$ of $w,$ and likewise $S_j=\psi(v)$ for the suffix $v=a_{j+1}\cdots a_{|w|}.$  
If
$$(P,S)\stackrel{a_i}{\longrightarrow}(P',S')$$
is on the path traced by $w,$ then we have $P\subseteq P'$ and $S'\subseteq S.$  Either $P=P'$ and $S=S',$ in which case this arrow belongs to one of the base monoids, or at least one of the inclusions is proper.  Since $h_U(\tau(w))=h_U(\tau(w')),$ the same pairs $(P,S), (P',S')$ must occur in the path traced by $w'.$  Because of the inclusions, they must occur in the same relative order in this path, with $(P,S)$ preceding $(P',S').$  They also must be adjacent in this path, since if there were a third pair $(P'',S'')$ between them, we would have
$$P\subseteq P''\subseteq P', S'\subseteq S''\subseteq S,$$
so this new pair would have to occur in the original path traced by $w,$ strictly between $(P,S)$ and $(P',S').$  Finally, the letter $a$ labeling the arrow joining these two objects in the respective paths is completely determined by $(P,S)$ and $(P',S').$ This is because at least one of the two inclusions $P\subseteq P'$ and $S'\subseteq S$ is proper.  Assume without loss of generality that the first of these is a proper inclusion.  Then $P'$ contains a word that is not in $P,$ and the last letter of this word is $a.$

Thus our two paths are depicted by the diagram below:
{\small
\begin{center}
\begin{tikzpicture}[->,>=to,auto,semithick, node distance=2.4cm]
  \node[state] (A)  {$P_0'',S_0''$};
  \node[state,right of=A] (B)  {$P_1'',S_1''$};
  \node[right of=B,node distance=2.0cm] (C)  {};
  \node[right of=C,node distance=1.2cm] (D)  {};
  \node[state,right of=D,node distance=2.0cm] (E)  {$P_r'',S_r''$};

  \path (A) edge  node {$a_1''$} (B);
  \path (A) edge [loop above]  node {$u_0$} (A);
  \path (A) edge [loop below]  node {$u_0'$} (A);
  \path (B) edge  node {$a_2''$} (C);
  \path (B) edge [loop above]  node {$u_1$} (B);
  \path (B) edge [loop below]  node {$u_1'$} (B);
  \path (C) edge [-,style=dotted] node {} (D);
  \path (D) edge node {$a_r''$} (E);
  \path (E) edge [loop above]  node {$u_r$} (E);
  \path (E) edge [loop below]  node {$u_r'$} (E);

\end{tikzpicture}
\end{center}
}
    
The  paths traverse exactly the same sequence of distinct objects
$$(1,\psi(w))=(P_0'',S_0''), (P_1'',S_1''),\ldots,(P_r'',S_r'')=(\psi(w),1).$$
The arrow joining $(P_{j-1}'',S_{j-1}'')$ and $(P_{j}'',S_{j}'')$ in both these paths is the same letter $a_j''.$  For $j=0,\ldots,r$ each path contains a loop at $(P_j'',S_j'')$ labeled by a factor $u_j$ of $w$ in one path, and a factor $u_j'$ in the other path. We have
$$w = u_0a_1'' u_1\cdots a_r''u_r,$$
$$w' = u_0'a_1'' u_1'\cdots a_r''u_r'.$$
Let $w_0=w,$  and for $j=0,\ldots,r,$ let 
$$w_{j+1}=u_0'a_1''\cdots u_{j}'a_{j+1}''u_{j+1}\cdots a_ru_r,$$
so that $w_{r+1}=w'.$
In other words, we transform $w$ into $w'$ one step at a time, changing each $u_j$ in succession to $u_j'.$  We claim that at each step, $\phi(w_{j})=\phi(w_{j+1}),$ so that we will get $\phi(w)=\phi(w'),$ as required.  Let $(P,S)=(P_j'',S_j'').$  Then, by the definition of the homomorphisms $h_{P,S},$ we have $h_{P,S}(\tau_{\psi}(w))=h_{P,S}(\tau_{\psi}(u_j)),$ $h_{P,S}(\tau_{\psi}(w'))=h_{P,S}(\tau_{\psi}(u_j')),$ and thus
$h_{P,S}(\tau_{\psi}(u_j))=h_{P,S}(\tau_{\psi}(u_j')).$
This means that
$$(P,S)\stackrel{u_j,u_j'}{\longlongrightarrow}(P,S)$$
are equivalent arrows.  Thus
\begin{eqnarray*}
\phi(w_j)=& \phi(u_0'a_1''\cdots u_{j-1}'a_j''u_ja_{j+1}''\cdots u_r)& \\
         =&\phi(u_0'a_1''\cdots u_{j-1}'a_j''u_j'a_{j+1}''\cdots u_r)&=\phi(w_{j+1}).
\end{eqnarray*}
\end{proof}

We remark that the hypothesis $\JI\subseteq \V$ is actually not necessary.  If \V\ is a pseudovariety of monoids that does not contain $\JI,$ then every member of \V\ is a group, and it is know that the converse of Proposition~\ref{proposition:local} holds when \V\ contains only groups; this follows from results in~\cite{Ti87}. We do not require this fact in our main application to the alternation hierarchy.

\section{Effective characterization of levels of the alternation hierarchy}\label{section:identities}

We now define a sequence of identities that will allow us to characterize the varieties $\V_n$.
We set
$$u_1=(x_1x_2)^{\omega}, v_1=(x_2x_1)^{\omega},$$
and for $n\geq 1,$
$$u_{n+1}=(x_1\cdots x_{2n}x_{2n+1})^{\omega}u_n(x_{2n+2}x_1\cdots x_{2n})^{\omega},$$
$$v_{n+1}=(x_1\cdots x_{2n}x_{2n+1})^{\omega}v_n(x_{2n+2}x_1\cdots x_{2n})^{\omega}.$$

\begin{theorem}\label{thm:identities} Let $n\geq 1.$   $M\in\V_n$ if and only if
$M\models (u_n=v_n),$ and $M\models (x^{\omega}=xx^{\omega}).$
\end{theorem}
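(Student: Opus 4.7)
My plan is to prove the theorem by induction on $n$. The base case $n=1$ is immediate: the identities defining $\J = \V_1$ listed in Section~\ref{section:preliminaries} are exactly $(x_1 x_2)^\omega = (x_2 x_1)^\omega$ and $x^\omega = xx^\omega$, which coincide with $u_1 = v_1$ and the aperiodicity identity. For the inductive step, I would treat the two directions separately, assuming the theorem at level $n$ and proving it at level $n+1$.

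For the forward direction, I would fix $M \in \V_{n+1} = \V_n \mathbin{**} \J$ and a surjection $\phi: A^* \to M$. Aperiodicity of $M$ follows from $\V_{n+1} \subseteq \DA \subseteq \Ap$. To establish $u_{n+1} = v_{n+1}$, I would apply Proposition~\ref{proposition:local} to obtain $\psi: A^* \to N \in \J$ with every base monoid of $\ker(\psi \circ \phi^{-1})$ in $\V_n$. Given a valuation, I would lift it to words $a_1, \ldots, a_{2n+2} \in A^*$. Using the cyclic-rotation identities $(xy)^\omega = (yx)^\omega$ and $(xy)^\omega x = y(xy)^\omega = (xy)^\omega$ valid in the $\J$-trivial monoid $N$, one verifies that $\psi(a_i)$ stabilizes both $P = \psi((a_1 \cdots a_{2n+1})^\omega)$ and $S = \psi((a_{2n+2} a_1 \cdots a_{2n})^\omega)$ for $i = 1, \ldots, 2n$. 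Consequently the word $u_n[a_1, \ldots, a_{2n}]$ is a loop at $(P, S)$, and by the induction hypothesis applied to $M_{P,S} \in \V_n$, it represents the same arrow as $v_n[a_1,\ldots,a_{2n}]$. Taking the two outer $\omega$-factors themselves as witnesses in the arrow-equivalence definition yields $\phi(u_{n+1}[\vec a]) = \phi(v_{n+1}[\vec a])$.

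For the reverse direction, I would assume $M \models u_{n+1} = v_{n+1}$ and $M \models x^\omega = xx^\omega$, and pick $\psi = \psi_k : A^* \to A^*/\sim_k$ for $k$ chosen sufficiently large in terms of $|M|$. By Theorem~\ref{thm:localglobal} it will suffice to show that every base monoid $M_{P,S}$ of $\ker(\psi_k \circ \phi^{-1})$ lies in $\V_n$; aperiodicity passes to $M_{P,S}$ as a divisor of $M$, so by the induction hypothesis this reduces to showing $M_{P,S} \models u_n = v_n$. Spelled out, the goal is that for any $b_1, \ldots, b_{2n} \in U_{P,S}$ and any $V, W \in A^*$ with $\psi_k(V) = P$ and $\psi_k(W) = S$,
$$\phi(V \cdot u_n[\vec b] \cdot W) = \phi(V \cdot v_n[\vec b] \cdot W).$$
The strategy is to instantiate the hypothesis $M \models u_{n+1} = v_{n+1}$ with $x_i = \phi(b_i)$ for $i \leq 2n$, $x_{2n+1} = \phi(V)$, and $x_{2n+2} = \phi(W)$, producing an equation where the outer factors have the shape $\phi((BV)^\omega)$ and $\phi((WB)^\omega)$ with $B = b_1 \cdots b_{2n}$, and then to bridge to the required equation via a technical sublemma in the style of Lemma~\ref{lem:sublemma}.

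The hard part will be this bridging sublemma, which must assert that, when flanking a loop at $(P, S)$, the idempotent $\phi((BV)^\omega)$ is interchangeable with $\phi(V)$ (and symmetrically for $\phi(W)$). It does not reduce to a literal equality like $\phi(V) = \phi(L) \cdot \phi((BV)^\omega)$, which need not hold in general; rather, interchangeability must be established only in the presence of a trailing factor from $U_{P,S}$. Choosing $k$ large relative to $|M|$ is essential so that $\sim_k$ retains enough subword information to carry the argument, and aperiodicity of $M$ is used to collapse high powers. This lemma is the elementary analogue of Almeida's structural results on the free profinite $\J$-trivial monoid; once it is in hand, the remainder of the reverse direction is bookkeeping.
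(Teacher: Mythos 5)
Your overall architecture coincides with the paper's: induction on $n$, with the reverse direction reduced via Theorem~\ref{thm:localglobal} to showing that every base monoid of $\ker(\psi_k\circ\phi^{-1})$ satisfies the identity one level down, for $k$ chosen large relative to $|M|$. Your treatment of the forward direction is actually more than the paper gives --- the paper simply cites~\cite{St11} for that half --- and your sketch of it (lift the valuation to words, check via the $\J$-identities that the $a_i$ stabilize $P$ and $S$, apply the inductive hypothesis inside the base monoid, then specialize the witnesses $v,w$ in the arrow equivalence to the outer $\omega$-powers) is sound and is essentially the argument of~\cite{St11}.

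The gap is in the reverse direction, and it sits exactly where you park the ``hard part'': the bridging sublemma is never proved, and it is the entire content of the theorem. Moreover, the substitution you propose ($x_{2n+1}=\phi(V)$, $x_{2n+2}=\phi(W)$) is not the one that works, and your proposed repair (``interchangeability only in the presence of a trailing factor from $U_{P,S}$'') is not how the difficulty is resolved. What the paper actually does is combinatorial: Lemma~\ref{lem:sublemma} shows that if $k>|M|\cdot(|A|^2+|A|)/2$, then any $z$ with $\psi_k(z)=P$ has a suffix factoring as $z_1\cdots z_{|M|}$ with each block containing every letter of the loop words; its proof is a nontrivial induction on $|\alpha(z)|$ with explicit subword bookkeeping. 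Pumping over these $|M|$ blocks yields indices $i\leq j$ with $\phi(z_1\cdots z_{i-1})$ absorbing the idempotent $e=\phi(z_i\cdots z_j)^{\omega}$, and since every relevant generator lies in $M_e$, the $\DA$ identity $e=eM_ee$ produces a \emph{literal} absorption equality $\phi(z')=\phi(z')\cdot(s_1\cdots s_{2n-1})^{\omega}$ --- but only after the variable you would set to $\phi(V)$ is instead set to the corrected element $s_{2n-1}=e\cdot\phi(z_{j+1}\cdots z_{|M|})$. (A symmetric construction handles the prefix of the representative of $S$.) So, contrary to your remark, an unconditional equality is exactly what gets established; the trick is choosing the substituted element so that it holds. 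Without Lemma~\ref{lem:sublemma}, the pumping step, and this corrected substitution, the instantiation of $u_{n+1}=v_{n+1}$ does not connect to the arrow equivalence you need, and the proof is incomplete precisely at its essential step.
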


As we remarked above, when a pseudovariety $\V$is defined by a finite set of identities of this type, one can decide membership in \V. Since the levels of the alternation hierarchy in $\FO^2[<]$ are the varieties of languages corresponding to the $\V_i,$ the alternation depth of a language in $\FO^2[<]$ is effectively computable.

\begin{proof}
The `only if' part (the identities hold in $\V_n$) is proved in~\cite{St11}, so we will just give the proof of the `if' part (sufficiency of the identities).  

We prove the theorem by induction on $n$.
It is well known that the identities $u_1=v_1, x^{\omega}=xx^{\omega}$ characterize $\V_1=\J.$  

So we assume $n>1$ and suppose that $M\models (u_n=v_n)$.  We let $\phi,\psi$ be as in the previous section, so that $\phi$ is any morphism mapping onto $M\in\V_n$, and  $\psi$ depends on the choice of a subword length $K$.  We will show that if $K$ is chosen to be a large enough value, then each base monoid $M_{P,S}$ of the category $\ker (\psi\circ\phi^{-1})$ satisfies the identity $u_{n-1}=v_{n-1}.$  By the inductive hypothesis, this implies that each $M_{P,S}$ belongs to $\V_{n-1},$ and thus by Theorem \ref{thm:localglobal}, $M\in \V_{n-1}\mathbin{**}\J=\V_n.$

We let  $x_1,\ldots x_{2(n-1)}$ be any elements of $M_{P,S}.$   Thus, each $x_i$ is represented by a triple
$$(P,S)\stackrel{w_i}{\longrightarrow} (P,S),$$
where $w_i\in A^*,$ $P\cdot\psi(w_i)=P,$ $\psi(w_i)\cdot S= S.$

We construct words $W_{n-1},W'_{n-1}\in A^*$ by replacing each $x_i$ in $u_{n-1}$ (respectively $v_{n-1}$) by $w_i.$ We will think of $\omega$ in these strings as representing a finite exponent $N$
such that $x^N=x^{N+1}$ for all $x\in M,$ and hence for all $x\in M_{P,S}.$  Thus if $n>2,$
$$W_{n-1}=(w_1w_2\cdots w_{2n-3})^{N}W_{n-2}(w_{2n-2}w_1\cdots w_{2n-4})^{N},$$
$$W_{n-1}'=(w_1w_2\cdots w_{2n-3})^{N}W_{n-2}'(w_{2n-2}w_1\cdots w_{2n-4})^{N}.$$
In the special case $n=2,$ we have $W_1=(w_1w_2)^N,$ $W_1'=(w_2w_1)^N.$

Let $z,y\in A^*$ with $\psi(z)=P,$ $\psi(y)=S.$ 
If $w\in A^*,$ we denote by $\alpha(w)$ the set of letters occurring in $A^*.$  We also denote by $B$ the set $\alpha(W_{n-1})=\alpha(W_{n-1}').$  

\begin{lemma}\label{lem:sublemma} If $K>|M|\cdot (|A|^2+|A|)/2,$ then $z$ has a suffix $z'$ with a factorization
$$z'=z_1z_2\cdots z_{|M|}$$
where 
$$B\subseteq \alpha(z_1)=\alpha(z_2)=\cdots=\alpha(z_{|M|}),$$
and, likewise, $y$ has a prefix $y'$ with a factorization
$$y'=y_1y_2\cdots y_{|M|}$$
where
$$B\subseteq \alpha(y_1)=\alpha(y_2)=\cdots=\alpha(y_{|M|}).$$
\end{lemma}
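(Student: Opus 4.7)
The plan is first to convert the algebraic hypothesis $\psi(z)=P$ and $P\cdot\psi(W_{n-1})=P$ into a combinatorial richness property of $z$, and then to read the factorization off a careful choice of suffix. Since $\alpha(W_{n-1})=B$ and $\psi_K(z\cdot W_{n-1}^r)=\psi_K(z)$ for every $r\geq 0$, the words $z$ and $zW_{n-1}^r$ share their scattered subwords of length at most $K$. For any subword $u_1$ of $z$ realized inside $z$ and any $u_2\in B^{\leq K-|u_1|}$, we can realize $u_2$ inside $W_{n-1}^r$ for $r$ large (possible since every letter of $B$ occurs in $W_{n-1}$), obtaining the richness property: $u_1u_2$ is itself a subword of $z$. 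In particular, $z$ realizes every word in $B^{\leq K}$, and hence $B\subseteq\alpha(z)$.

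Next, I analyze how the suffix alphabet of $z$ shrinks. Writing $\alpha(z[i{:}|z|])$ as the non-increasing chain $D_0\supsetneq D_1\supsetneq\cdots\supsetneq D_r=\emptyset$ with $s_j$ the least $i$ at which the value is $D_j$, pick $j^*$ maximal with $D_{j^*}\supseteq B$ and set $w=z[s_{j^*}{:}|z|]$, so $\alpha(w)=D_{j^*}\supseteq B$. To show $w$ itself inherits enough richness, I apply the richness property to a subword of $z$ of the form $u_1=d_1d_2\cdots d_{j^*}$, where $d_i\in D_{i-1}\setminus D_i$ is the letter stripped at step $i$, realized at its last occurrence $s_i-1$; this realization ends at $s_{j^*}-1$, so any trailing $u_2\in B^{\leq K-j^*}$ necessarily lies inside $w$, showing that $w$ realizes every $B$-word of length at most $K-j^*$ as a scattered subword.

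Armed with this richness, I greedy-factor $w$ from the right with target $D_{j^*}$: each greedy block has alphabet $\supseteq D_{j^*}$ by greediness and $\subseteq D_{j^*}$ by being a factor of $w$, hence $\alpha=D_{j^*}$. I then invoke the standard observation (analogous to the scattered-subword reasoning used in the proof of Theorem~\ref{thm:localglobal}) that if $w$ contains $(d_{\sigma(1)}\cdots d_{\sigma(|D_{j^*}|)})^{|M|}$ as a subword for some ordering $\sigma$ of $D_{j^*}$, then the right-to-left greedy yields at least $|M|$ consecutive $D_{j^*}$-blocks. The prefix factorization of $y$ follows by the mirror argument: extract the dual richness property from $\psi(W_{n-1})\cdot\psi(y)=\psi(y)$ (arbitrary $B$-prefix may be prepended to any subword of $y$) and analyze the chain of prefix alphabets of $y$.

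The hard part will be the richness-transfer step: the raw richness only guarantees that \emph{some} realization of $u_1$ is followed by $u_2$, not the specific last-occurrence realization that lands $u_2$ inside $w$. The fix is to pad $u_1$ with all occurrences of the critical letter (using $d_{j^*}^{k_{j^*}}$ where $k_{j^*}$ is the total count of $d_{j^*}$ in $z$, and similarly for earlier letters), so that the realization is forced to exhaust the last occurrences; the extra length of $u_1$ then consumes part of the $K$-budget. The accounting is organized as an induction on $|A\setminus B|$ that strips one letter per step, with the cost per step bounded by roughly $|M|$ times the current alphabet size, telescoping to the stated bound $K>|M|(|A|^2+|A|)/2$.
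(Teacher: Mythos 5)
Your overall shape (pass to a suffix over a controlled alphabet, with an induction whose budget telescopes to $|M|(|A|^2+|A|)/2$) matches the paper's, but the specific mechanism has gaps I do not see how to repair inside your framework. The decisive one is the richness-transfer step, which you yourself flag as the hard part: your fix is to pad $u_1$ with \emph{all} occurrences of the stripped letters, but the number of occurrences of a letter in $z$ is not bounded by any function of $|M|$ and $|A|$, so the padding can exceed the entire subword budget $K$. Concretely, take $B=\{b\}$ and $z=(cb)^R$ with $R\gg K$. The hypothesis $\psi_K(zb)=\psi_K(z)$ holds, yet your maximal $j^*$ is $1$ and $w$ is the single final letter $b$, which cannot be factored into $|M|$ blocks; forcing a realization of $c$ at its \emph{last} occurrence costs $R$ letters of budget. (The correct suffix here is all of $z$, with blocks of alphabet $\{b,c\}\supsetneq B$, so the choice ``maximal $j^*$ with $D_{j^*}\supseteq B$'' is also aimed at the wrong suffix.) A further gap: your greedy step needs $w$ to contain $|M|$ repetitions of its \emph{full} alphabet $D_{j^*}$ as a scattered subword, but the richness you extract from $\psi_K(zb)=\psi_K(z)$ only lets you append letters of $B$, which may be a proper subset of $D_{j^*}$.

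The paper's proof avoids all three problems with a different decomposition. It inducts on $k=|\alpha(z)|$ with parameter $T=|M|(k^2+k)/2$ and, at each step, considers the \emph{longest} subword $t_1\cdots t_p$ of $z$ splitting into $p$ blocks each with alphabet equal to $A'=\alpha(z)$. If $p\geq|M|$, it takes $z'=z$ and the blocks have common alphabet $A'$. If $p<|M|$, it takes the \emph{leftmost} embedding of $t_1\cdots t_p$; the final gap $z_{ps}$ then has a strictly smaller alphabet, and it inherits the closure property $\psi_{T'}(z_{ps}b)=\psi_{T'}(z_{ps})$ for the reduced parameter $T'$, because a missing short extension $ub$ in $z_{ps}$ would make $t_1\cdots t_p u b$ a subword of $z$ within the original budget and hence force a strictly earlier occurrence of $t_1\cdots t_p$ --- contradicting leftmostness. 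That contradiction argument is what replaces your attempt to pin a subword realization to specific positions, and it is exactly the step your padding cannot afford.
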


Assuming the lemma, we will now complete the proof of Theorem \ref{thm:identities}.  
Since $M\models (u_n=v_n),$ we obtain $M\models (xy)^{\omega}(yx)^{\omega}(xy)^{\omega}=(xy)^{\omega}$ by setting $x_1=x,$ $x_2=y,$ and $x_k=1$ for $k>2.$  Thus $M\in\DA.$

We can write $z=z''z',$ where
$z'=z_1\dots z_{|M|}$ has a factorization as in Lemma~\ref{lem:sublemma}. By the standard pumping argument, it follows that there are indices $i\leq j$ such that
$\phi(z_1\ldots z_{i-1})\phi(z_i\cdots z_j)=\phi(z_1\ldots z_{i-1}),$
and thus 
$$\phi(z_1\ldots z_{i-1})\phi(z_i\cdots z_j)^{\omega}=\phi(z_1\ldots z_{i-1}).$$

If we now set 
$$e=\phi(z_i\cdots z_j)^{\omega}$$
$$s_{2n-1}=e\cdot \phi(z_{j+1}\cdots z_{|M|})\text{, and}$$ 
$$s_i=\phi(w_i)\text{ for $i<2n-1,$}$$
 we obtain, from the identity $e\cdot M_e\cdot e=e,$
\begin{eqnarray*}
\phi(z') &=& \phi(z_1\dots z_{i-1}) \cdot \phi(z_{j+1}\cdots z_{|M|})\\
&=& \phi(z_1\dots z_{i-1}) e\cdot e \phi(z_{j+1}\cdots z_{|M|})\\
&=& \phi(z_1\dots z_{i-1}) e \cdot \left(\phi(z_{j+1}\cdots z_{|M|}) (s_1\cdots s_{2n-1})^{\omega -1} s_1\cdots s_{2n-2}\right)\cdot e \phi(z_{j+1}\cdots z_{|M|})\\
&=& \phi(z_1\dots z_{i-1}) e \cdot \phi(z_{j+1}\cdots z_{|M|}) (s_1\cdots s_{2n-1})^{\omega}\\
&=& \phi(z' )\cdot (s_1\cdots s_{2n-1})^{\omega}.
\end{eqnarray*}
The third equality above holds because by Lemma~\ref{lem:sublemma} $z_{i+1}\cdots z_j$ contains all the letters that occur in the $z_k$ and the
$w_k,$ and hence all the values we inserted between occurrences of $e$ belong to $M_e.$

Similarly, using the part of Lemma~\ref{lem:sublemma} concerning the prefix of $y$, we find a value $s_{2n}$ such that
$\phi(y')=(s_{2n}s_1\cdots s_{2n-2})^{\omega}\phi(y').$  Since $M\models (u_n=v_n)$ we obtain
\begin{eqnarray*}
\phi(zW_{n-1}y) &=& \phi(z'')\phi(z')\phi(W_{n-1})\phi(y')\phi(y'')\\
&=&\phi(z'')\phi(z')(s_1\cdots s_{2n-1})^{\omega}\phi(W_{n-1})(s_{2n}s_1\cdots s_{2n-2})^{\omega}\phi(y')\phi(y'')\\
&=&\phi(z'')\phi(z')(s_1\cdots s_{2n-1})^{\omega}\phi(W_{n-1}')(s_{2n}s_1\cdots s_{2n-2})^{\omega}\phi(y')\phi(y'')\\
&=& \phi(z'')\phi(z')\phi(W_{n-1})\phi(y')\phi(y'')\\
&=& \phi(zW_{n-1}'y).
\end{eqnarray*}
But this means that $M_{P,S}\models (u_{n-1}=v_{n-1}),$ as we required.
\end{proof}

We now turn to the proof of Lemma~\ref{lem:sublemma}.  
\begin{proof}[Proof of Lemma \ref{lem:sublemma}]
By symmetry, we only need to treat the part concerning the suffix of $z.$   Recall that $\psi(z)=\psi_K(z)$ is the set of subwords of length no more than $K$ in $z,$ and that $\psi(zb)=\psi(z)$ for all $b\in B.$ 

We will show that if $B\subseteq\alpha(z)$ and $\psi_T(zb)=\psi_T(z)$ for all $b\in B,$ where
$$T=|M|\cdot(k^2+k)/2,$$
and $k=|\alpha(z)|,$
then $z$ contains a suffix with the required properties.  This gives the lemma, because $\psi_K(zb)=\psi_K(z)$ implies $\psi_T(zb)=\psi_T(z)$ for any $\alpha(z)\subseteq A.$

 The proof is by induction on $|\alpha(z)|.$
The base case is when $\alpha(z)=B.$  Let $B=\{b_1,\ldots b_r\}.$  By repeated application of $\psi(zb_i)=\psi(z)$ we find $(b_1\cdots b_r)^{|M|},$ which has length $|M||B|\leq  |M|(|B|^2+|B|)/2,$ is a subword of $z.$ 
 The base case occurs when $B=\alpha(z).$  In this case we can simply take $z'=z$ and factor $z=z_1\cdots z_{|M|},$ where each $z_i$ contains one of the factors $b_1\cdots b_r$ as a subword.

We thus suppose that $\alpha(z)=A'$ contains $B$ as a proper subset. Let $N=|A'|.$ We look at the longest subword
$t_1\cdots t_p$ of $z$ such that $\alpha(t_i)=A'.$  We must have $p\geq 1.$  If $p\geq |M|,$ we can again take $z'=z$ and factor $z$ as $z_1\cdots z_{|M|},$ where each $z_i$ contains $t_i$ as a subword.  If $p<|M|,$ we let $s=|A'|$, then we write
$$t_1\cdots t_p= a_1a_2\cdots a_{ps}\text{, and}$$
$$z=z_0a_1z_1\cdots a_{ps}z_{ps}.$$
We further suppose that this factorization represents the leftmost occurrence of $a_1\cdots a_{ps}$ as a subword of $z,$ in other words that $z_{ps}$ has maximum possible length for this property.  Note that $\alpha(z_{ps})$ is a strict subset of $A',$ for otherwise $z$ would have contained a longer subword $t_1\cdots t_{p+1}$ with $\alpha(t_i)=A'.$ Thus $|\alpha(z_{ps})|\leq N-1.$  Set $T=|M|\cdot((N-1)^2+(N-1))/2.$  We must have $\psi_{T}(z_{ps}b)=\psi_T(z_{ps})$ for all $b\in B.$  If not, there is a subword $u$ of $z_{ps}$ of length less than $T$ such that $ub$ is not a subword of $z_{ps}.$  However $t_1\cdots t_pub$ has length no more than 
$$(|M|-1)\cdot N+T<|M|\cdot (N+((N-1)^2+(N-1))/2)=|M|\cdot(N^2+N)/2,$$
and is accordingly a subword of $z,$ and thus there is a strictly earlier occurrence of $t_1\cdots t_p$ as
a subword of $z,$ a contradiction.  We can thus apply the inductive hypothesis to $z_{ps}$ and conclude that $z_{ps}$ contains a suffix of the required type.
\end{proof}

\section{Collapse of the hierarchy}\label{section:collapse}

In the original model-theoretic study of the alternation hierarchy in $\FO^2[<],$ Weis and Immerman~\cite{WeIm09} showed that while the hierarchy is strict, it collapses for each fixed-size alphabet.  An algebraic proof of strictness was given in~\cite{St11}, using the identities that form the subject of the present paper.  Here we use these techniques to prove the collapse result.

\begin{theorem}\label{thm.collapse}
Let $n>0.$ If $M\in\DA$ is generated by $n$ elements, then $M\in\V_n.$
\end{theorem}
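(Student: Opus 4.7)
The plan is to prove the theorem by induction on $n$, reducing to identity verification via Theorem~\ref{thm:identities}: since $\DA \subseteq \Ap$, the identity $x^\omega = xx^\omega$ is automatic, so it suffices to verify $M \models (u_n = v_n)$. For the base case $n = 1$, a monoid generated by a single element is commutative and aperiodic, and the identity $(x_1 x_2)^\omega = (x_2 x_1)^\omega$ holds trivially.

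For the inductive step with $n \geq 2$, assume the theorem for $n-1$, fix a generating set $A = \{a_1,\ldots,a_n\}$ with surjection $\phi\colon A^* \to M$, and consider an arbitrary valuation $x_i \mapsto m_i = \phi(w_i)$, with $w_i \in A^*$ chosen freely. Set $e = (m_1\cdots m_{2n-1})^\omega$, $f = (m_{2n}m_1\cdots m_{2n-2})^\omega$, $p = u_{n-1}(m_1,\ldots,m_{2n-2})$, and $q = v_{n-1}(m_1,\ldots,m_{2n-2})$; the goal is to show $epf = eqf$. The argument splits according to $B = \alpha(w_1\cdots w_{2n-2})$, the set of letters occurring in the representatives of $m_1,\ldots,m_{2n-2}$.

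If $B \subsetneq A$, then $m_1,\ldots,m_{2n-2}$ all lie in the submonoid $\phi(B^*)$, which is in $\DA$ by division closure and is generated by $|B| \le n-1$ elements. By the inductive hypothesis $\phi(B^*) \in \V_{n-1}$, so by Theorem~\ref{thm:identities} it satisfies $u_{n-1} = v_{n-1}$, giving $p = q$ and hence $epf = eqf$.

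If instead $B = A$, then every generator of $M$ appears among $w_1,\ldots,w_{2n-2}$, hence also in $w_1\cdots w_{2n-1}$ and in $w_{2n}w_1\cdots w_{2n-2}$, so that each $\phi(a_j)$ lies in both $M_e$ and $M_f$; because the $\phi(a_j)$ generate $M$, this forces $M_e = M_f = M$. The $\DA$ identity $eM_ee = \{e\}$ then collapses to $eMe = \{e\}$ (and symmetrically $fMf = \{f\}$), giving $fef = f$ and $emfe = e$ for every $m \in M$. A direct calculation,
$$emf = em(fef) = (emfe)f = ef,$$
then shows $emf = ef$ for all $m$, so $epf = ef = eqf$. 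The main obstacle is this second case: the key observation is that the hypothesis $|A| = n$ is exactly what forces $M_e = M$ once $e$ has full content over the generating set, making the $\DA$ identity strong enough to collapse $emf$ to $ef$ uniformly in $m$.
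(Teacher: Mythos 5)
Your proof is correct, and in the crucial case it takes a genuinely different (and more self-contained) route than the paper. The case split is the same: if the chosen representatives of $m_1,\dots,m_{2n-2}$ miss a generator, both arguments invoke the inductive hypothesis on an $(n-1)$-generated submonoid to get $p=q$ directly. The difference is in the full-content case. The paper assumes $M\models(u_N=v_N)$ for some $N>n$ (which it must first justify via the result of~\cite{St11} that $\DA=\bigcup_i\V_i$ together with the ``only if'' half of Theorem~\ref{thm:identities}), and then uses $e=eM_ee$ to insert $(X_1\cdots X_{2n})^{\omega}$ between duplicated idempotents, showing $U_n=AU_NB$ and $V_n=AV_NB$ and concluding from $U_N=V_N$. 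You instead observe that when every generator occurs in a word representing $e=(m_1\cdots m_{2n-1})^{\omega}$ and $f=(m_{2n}m_1\cdots m_{2n-2})^{\omega}$, each $\phi(a_j)$ satisfies $e\in M\phi(a_j)M$ and $f\in M\phi(a_j)M$, so $M_e=M_f=M$; the $\DA$ characterization then gives $eMe=\{e\}$ and $fMf=\{f\}$, and the computation $emf=em(fef)=(emfe)f=ef$ collapses both $epf$ and $eqf$ to $ef$ at once. This avoids any appeal to the higher-level identity $u_N=v_N$ and hence to the union theorem of~\cite{St11}, at the price of using the $e=eM_ee$ characterization of $\DA$ slightly more aggressively (the paper uses it too, but only to insert elements of $M_e$, not to identify $M_e$ with all of $M$). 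Both arguments are sound; yours is arguably the cleaner of the two, and it makes explicit the structural fact driving the collapse, namely that once the first $2n-2$ values exhaust the generating set, the outer idempotents lie in the minimal ideal and absorb everything between them.
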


In particular for any fixed alphabet the quantifier alternation hierarchy collapses.
\begin{corollary}
Any language over a $k$-letter alphabet definable by a two-variable sentence is definable by one in which the number of quantifier blocks is $k.$ 
\end{corollary}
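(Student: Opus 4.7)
The plan is to derive the corollary as a direct consequence of Theorem~\ref{thm.collapse} via the syntactic monoid, invoking the algebra--logic dictionary summarized in the Preliminaries and in Section~\ref{section.twosidedproducts}.

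First, I would fix a language $L \subseteq A^*$ with $|A| = k$ that is definable by a two-variable first-order sentence. By the characterization recalled in the discussion of $\DA$, this is equivalent to $M(L) \in \DA$. The key combinatorial observation is that the syntactic morphism $\phi_L : A^* \to M(L)$ is surjective, so its image $\phi_L(A)$ generates $M(L)$; hence $M(L)$ is generated by at most $k$ elements.

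Next, I would apply Theorem~\ref{thm.collapse} with $n = k$: since $M(L) \in \DA$ is $k$-generated, Theorem~\ref{thm.collapse} yields $M(L) \in \V_k$. By the main result of~\cite{St11} recalled at the end of Section~\ref{section.twosidedproducts}, the variety of languages corresponding to $\V_k$ is exactly the $k$-th level of the quantifier alternation hierarchy within $\FO^2[<]$. Therefore $L$ is definable by a two-variable sentence whose number of quantifier blocks is at most $k$, which is the statement of the corollary.

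The only subtlety worth flagging is the standard one that the count of quantifier blocks in the corollary should be read as ``at most $k$,'' since Theorem~\ref{thm.collapse} only asserts membership in $\V_k$ (i.e., level $\leq k$) rather than exact level $k$; this is consistent with the intended meaning of the statement and with Weis--Immerman~\cite{WeIm09}. There is no real obstacle here: once Theorem~\ref{thm.collapse} is in hand, the corollary is immediate from the elementary fact that a syntactic monoid is generated by the alphabet and from the correspondence between $\V_i$ and the $i$-th alternation level.
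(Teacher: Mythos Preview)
Your proposal is correct and is exactly the intended argument: the paper does not give a separate proof of the corollary at all, treating it as immediate from Theorem~\ref{thm.collapse} via the syntactic monoid and the correspondence between $\V_k$ and the $k$-th alternation level recalled from~\cite{St11}. Your remark that the conclusion should be read as ``at most $k$'' quantifier blocks is also on point.
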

\begin{proof}[Proof of Theorem \ref{thm.collapse}]
We prove  by induction on $n$ that if $M$ is generated by $n$ elements and $M\models (u_{N}=v_{N})$ for some $N>n,$ then $M\models (u_n=v_n).$   By Theorem \ref{thm:identities}, this implies the result. Every monoid with one generator is commutative, which gives the result for $n=1.$   We now let $n>1,$ and suppose $M\models (u_{N}=v_{N}).$ Let $X_1,\ldots, X_{2n}\in M.$  Consider the valuation that maps each variable $x_i$ to $X_i$ and let $U_j, V_j\in M$ for $j\leq n$ be the resulting valuations of the terms $u_j, v_j.$    We suppose that $M$ is generated by $n$ elements, so that each $X_i$ itself is a product of these generators.

We consider two cases:  If $X_1,\ldots, X_{2n-2}$ can all be written as products of elements of some strict subset of these  $n$ generators, then all the $X_i$ for $i\leq 2n-2$ belong to an $(n-1)$-generated submonoid $M'$ of $M.$
It follows by the inductive hypothesis that $M'\models (u_{n-1}=v_{n-1}),$ and thus
$U_{n-1}=V_{n-1}.$ 

In the second case, $X_1\cdots X_{2n-2}$ can be written  as a product involving all $n$ generators. 
To ease notation we let
$$A=(X_1\cdots X_{2n-1})^{\omega},$$
$$B=(X_{2n}X_1\cdots X_{2n-2})^{\omega}.$$

Then $U_n=A U_{n-1} B=A A U_{n-1} B B$, since $A$ and $B$ are idempotent.
We know $M\in\DA,$ and every $X_i$ (for $i=1,\dots,2N$) is a product of generators appearing in $X_1,\ldots X_{2n-2},$ and hence in $A$ and $B.$  It follows that all generators are in $M_A$ and $M_B$. We can repeatedly used the identity $e=e\cdot M_e\cdot e$ to insert any product of generators ---in particular, any $X_j$--- between two occurrences of either of the idempotents $A$ or $B.$  

If we set $X_j=1$ for $j=2n+1,\dots,2N$, we get that $$U_N=(X_1\dots X_{2n})^\omega U_n (X_1\dots X_{2n})^\omega.$$
Thus
\begin{eqnarray*}
U_n&=&AA U_{n-1} BB \\
&=& A(X_1\dots X_{2n})^\omega A U_{n-1} B (X_1\dots X_{2n})^\omega B\\
&=& A(X_1\dots X_{2n})^\omega U_{n} (X_1\dots X_{2n})^\omega B\\
&=& A U_{N} B\\
\end{eqnarray*}
Likewise $V_{n}=A V_N B.$  Since $M\models(u_N=v_N),$ we get $U_n=V_n$.
As the valuation on $x_1,\ldots, x_{2n}$ was arbitrary, we have $M\models (u_n=v_n).$
\end{proof}

\section{General decidability results}\label{section:decidability}

 Here we show that for arbitrary pseudovarieties $\V,$ the operation  $\V\mapsto \V\mathbin{**}\J$ preserve decidability.  This of course implies our result (a consequence of Theorem~\ref{thm:identities}) that the varieties $\V_j$ are all decidable, but Theorem~\ref{thm:identities} is a sharper result, since it gives explicit identities.  As we remarked in the introduction, the general decidability result was originally proved by Steinberg~\cite{St98}, but not previously published.  Our approach has the advantages both of being relatively elementary, and yielding explicit bounds on the complexity of membership testing.  

We suppose that $\phi:A^*\to M$ is a surjective homomorphism onto a finite monoid. Let $N>0,$ we denote by $\ker_N\phi$ the category $\ker (\psi_N\circ\phi^{-1}),$ where $\psi_N$ is the natural projection of $A^*$ onto the quotient $A^*/\sim_N.$  We set
$$K=|M|\cdot (|A|^2+|A|)/2$$ as in the statement of Lemma \ref{lem:sublemma}.  With these notations we have:

 \begin{theorem}\label{thm:decidable} Let \V\ be a pseudovariety of monoids.
$M\in \V\mathbin{**}\J$ if and only if every base monoid of $\ker_K\phi$ is in \V.\end{theorem}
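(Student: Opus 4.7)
This is immediate from Theorem~\ref{thm:localglobal}: since $A^*/\sim_K\in\J$ by Simon's theorem (Theorem~\ref{theorem.simon}), applying Theorem~\ref{thm:localglobal} with $\psi=\psi_K$ gives $M\in\V\mathbin{**}\J$. The hypothesis $\JI\subseteq\V$ in Theorem~\ref{thm:localglobal} is waived in the boundary case (where $\V$ consists only of groups) by the remark following it.

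\textbf{Only if direction, setup.} Assume $M\in\V\mathbin{**}\J$. Proposition~\ref{proposition:local} provides a homomorphism $\psi:A^*\to N\in\J$ such that every base monoid of $\ker(\psi\circ\phi^{-1})$ lies in $\V$. By Simon's theorem, $\psi$ factors through $\psi_L:A^*\to A^*/\sim_L$ for some $L\geq 1$, and Lemma~\ref{lemma:basemonoids} upgrades this to: every base monoid of $\ker_L\phi$ lies in $\V$. If $L\leq K$, then $\sim_K$ refines $\sim_L$, so $\psi_L$ factors through $\psi_K$; a second application of Lemma~\ref{lemma:basemonoids} shows that every base monoid of $\ker_K\phi$ divides one of $\ker_L\phi$, completing this case. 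The real difficulty is $L>K$, where Lemma~\ref{lemma:basemonoids} gives divisibility in the wrong direction; this is where the specific value $K=|M|\cdot(|A|^2+|A|)/2$ becomes essential via Lemma~\ref{lem:sublemma}.

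\textbf{Only if direction, main step.} Fix a base monoid $M_{P,S}$ of $\ker_K\phi$ with representatives $z,y\in A^*$. Set $B=\{b\in A:\psi_K(zb)=\psi_K(z),\,\psi_K(by)=\psi_K(y)\}$; one checks that every word in the generating set of $M_{P,S}$ lies in $B^*$, and conversely every $b_1\cdots b_r\in B^*$ stabilizes both $P$ on the right and $S$ on the left under $\psi_K$, so the generating set is exactly $B^*$. By Lemma~\ref{lem:sublemma} applied to this $B$, $z$ and $y$ admit the structured factorizations $z=z''z_1\cdots z_{|M|}$ and $y=y_1\cdots y_{|M|}y''$ with $\alpha(z_i)=\alpha(y_j)\supseteq B$ for all $i,j$. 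Let $h:\Sigma^*\to K_0\in\V$ be the $\V$-part of the decomposition of $M$, and let $e=\psi(u)^\omega\in N$ for any $u\in B^*$ with $\alpha(u)=B$; by $\J$-triviality of $N$, $e\psi(w)=\psi(w)e=e$ for all $w\in B^*$. Define, for each $(n,m)\in\bar P\times\bar S$ with $\bar P=\psi(\psi_K^{-1}(P))$ and $\bar S=\psi(\psi_K^{-1}(S))$, the map $\alpha_{n,m}:B^*\to K_0$ by $\alpha_{n,m}(w)=h({}^{ne}\tau_\psi(w)^{em})$. Absorption makes each $\alpha_{n,m}$ a homomorphism, and the bundled map $\alpha=\prod_{(n,m)}\alpha_{n,m}:B^*\to K_0^{\bar P\times\bar S}$ lands in a finite power of $K_0$, hence in $\V$.

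\textbf{Main obstacle.} The crux is verifying $\ker\alpha\subseteq{\sim_K^{(P,S)}}$, which then gives $M_{P,S}$ as a quotient of $\alpha(B^*)\in\V$. Given $w,w'\in B^*$ with $\alpha(w)=\alpha(w')$ and test words $v,v'$ satisfying $\psi_K(v)=P$, $\psi_K(v')=S$, one must show $\phi(vwv')=\phi(vw'v')$; by the product factorization of $M$, this reduces to the pair of equalities $\psi(vwv')=\psi(vw'v')$ in $N$ and $h(\tau_\psi(vwv'))=h(\tau_\psi(vw'v'))$ in $K_0$. The $h$-equality decomposes, via the standard splitting of $\tau_\psi$ on a concatenation, into a product whose only $w$-dependent factor is precisely $\alpha_{\psi(v),\psi(v')}(w)$, which matches its $w'$ counterpart by assumption. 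The delicate part is the $\psi$-equality, because $\alpha(w)=\alpha(w')$ only controls $w$ through $h$ and not through $\psi$; here the Lemma~\ref{lem:sublemma} decomposition of $z$ and $y$ does the essential work by ensuring that the idempotent $e$ can be inserted and absorbed within the product $\psi(v)\psi(w)\psi(v')$ against arbitrary $\psi_K$-representatives $v,v'$ of $P,S$, collapsing both sides to a common value $\psi(v)e\psi(v')$ independent of $w$. Assembling these verifications yields $M_{P,S}\in\V$, completing the proof.
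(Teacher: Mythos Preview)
Your ``if'' direction and the reduction to the case $L>K$ are fine and match the paper.  The ``main step,'' however, has a genuine gap at exactly the point you flag as delicate.

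You want to deduce $\phi(vwv')=\phi(vw'v')$ from $\alpha(w)=\alpha(w')$ by showing both $\psi(vwv')=\psi(vw'v')$ and $h(\tau_\psi(vwv'))=h(\tau_\psi(vw'v'))$.  For the $\psi$-equality you assert that Lemma~\ref{lem:sublemma} lets you collapse $\psi(v)\psi(w)\psi(v')$ to $\psi(v)e\psi(v')$; concretely this would require $\psi(v)\psi(b)=\psi(v)$ for each $b\in B$.  But all you know about $v$ is $\psi_K(v)=P$, while $\psi$ factors through $\psi_L$ with $L>K$, so $\psi$ may well distinguish $v$ from $vb$.  Lemma~\ref{lem:sublemma} only supplies $|M|$ factors of $v$ with content containing $B$; to force absorption in $N$ you would need on the order of $L$ (or $|N|$) such factors, and $L$ can be far larger than $|M|$.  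The same failure infects the $h$-equality: in the splitting of $\tau_\psi(vwv')$, the outer factors $h(\tau_\psi(v)^{\psi(wv')})$ and $h({}^{\psi(vw)}\tau_\psi(v'))$ depend on $\psi(w)$, and your middle factor $h({}^{\psi(v)}\tau_\psi(w)^{\psi(v')})$ does not coincide with $\alpha_{\psi(v),\psi(v')}(w)=h({}^{\psi(v)e}\tau_\psi(w)^{e\psi(v')})$ unless $\psi(v)e=\psi(v)$ and $e\psi(v')=\psi(v')$, which is precisely the unproved absorption.

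The paper's proof uses Lemma~\ref{lem:sublemma} for a completely different purpose: the $|M|$ factors are fed into a pigeonhole argument in $M$, not $N$.  One finds $i<j$ with $\phi(z_1\cdots z_i)=\phi(z_1\cdots z_j)$ and pumps the block $z_{i+1}\cdots z_j$ to manufacture $z'$ with $\phi(z')=\phi(z)$ but containing at least $K'$ factors of content $\supseteq B$; then $\psi_{K'}(z'b)=\psi_{K'}(z')$ for $b\in B$ holds automatically.  Setting $(P',S')=(\psi_{K'}(z'),\psi_{K'}(y'))$, one works directly with the base monoids $M_{P',S'}$ of $\ker_{K'}\phi$ (which are already in $\V$) rather than returning to the $(h,\psi)$ decomposition.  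The product of all such $M_{P',S'}$, ranging over all representatives $z,y$ of $(P,S)$, is a monoid in $\V$ that $M_{P,S}$ divides---and here the crucial equality $\phi(zwy)=\phi(z'wy')=\phi(z'w'y')=\phi(zw'y)$ uses only $\phi(z)=\phi(z')$, $\phi(y)=\phi(y')$ and the arrow equivalence in $M_{P',S'}$, bypassing the $\psi$-equality entirely.
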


We can effectively compute all the objects and arrow classes of $\ker_K\phi$ from $\phi,$ and we can also take $A=M$ and $\phi$ to be the extension of the identity map on $M$ to $A^*.$ The theorem thus immediately implies

\begin{corollary} If $\V$ is a decidable pseudovariety of finite monoids, then so is $\V\mathbin{**}\J.$ \end{corollary}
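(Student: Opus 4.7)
The plan is to deduce the corollary directly from Theorem~\ref{thm:decidable} by showing that the finite data needed to apply that theorem can be computed effectively from $M$, and that the resulting finite list of base monoids can each be tested for membership in $\V$ using the assumed decision procedure.

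First, given a finite monoid $M$ presented by its multiplication table, I would take $A = M$ and $\phi: A^* \to M$ the extension of the identity map, as the remark following Theorem~\ref{thm:decidable} authorizes. Compute $K = |M|(|A|^2 + |A|)/2$ explicitly; note that $K$ is polynomial in $|M|$. Next, I would effectively build the quotient $N = A^*/{\sim_K}$: since $N$ is a quotient of $A^*$ by a congruence whose classes are determined by the finite set of subwords of length at most $K$, $N$ can be computed by a standard saturation procedure (start from all words of length $\leq K$ and close under $\sim_K$-equivalence, which is decidable by inspecting subwords). The projection $\psi_K: A^* \to N$ is then given.

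Second, I would construct the kernel category $\ker_K\phi$ from $\phi$ and $\psi_K$. Its objects are the pairs $(n_1,n_2) \in N \times N$, a finite set that can be enumerated. For each candidate object $(n_1,n_2)$, the elements of the base monoid $M_{n_1,n_2}$ correspond to equivalence classes of words $u \in A^*$ satisfying $n_1 \cdot \psi_K(u) = n_1$ and $\psi_K(u) \cdot n_2 = n_2$, where two such words $u, u'$ are identified iff $\phi(vuw) = \phi(vu'w)$ for all $v,w \in A^*$ with $\psi_K(v) = n_1$, $\psi_K(w) = n_2$. This last equivalence is decidable because it suffices to check one representative $v$ and $w$ per $\psi_K$-class (only finitely many), and because once we fix such $v,w$, the value $\phi(vuw)$ depends only on $\phi(u)$ (and on the fixed choices), so the number of distinct classes is bounded by $|M|$. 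Hence the base monoid $M_{n_1,n_2}$ has cardinality at most $|M|$ and can be generated by enumerating representatives of bounded length (a standard breadth-first search closes the monoid under products and halts once no new classes appear), yielding its multiplication table.

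Third, with the multiplication table of each base monoid $M_{n_1,n_2}$ in hand, I would invoke the assumed decision procedure for $\V$ on each of the finitely many base monoids. By Theorem~\ref{thm:decidable}, $M \in \V \mathbin{**} \J$ iff every one of these base monoids lies in $\V$, so the procedure answers YES iff all checks succeed, and NO otherwise. This is a complete decision procedure for membership in $\V \mathbin{**} \J$. The main obstacle in writing this up cleanly is the effective termination of the base-monoid construction: one must argue that the equivalence of two coterminal arrows is decidable with only finitely many $v, w$ to test and that the resulting monoid is computed in finite time; both follow from the finiteness of $M$ and $N$, together with the fact that coterminal arrows are identified by a condition on the image in $M$. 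Once this bookkeeping is in place, decidability of $\V \mathbin{**} \J$ follows immediately.
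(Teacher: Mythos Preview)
Your proposal is correct and follows exactly the paper's approach: take $A=M$, compute $K$ and $\ker_K\phi$ effectively, test each base monoid with the oracle for $\V$, and apply Theorem~\ref{thm:decidable}. One small correction: when deciding whether two coterminal arrows at $(n_1,n_2)$ are equivalent, it does \emph{not} suffice to test a single representative $v,w$ per $\psi_K$-class, since words in the same $\psi_K$-class may have distinct $\phi$-images; you must range over all pairs $(m_1,m_2)\in\phi(\psi_K^{-1}(n_1))\times\phi(\psi_K^{-1}(n_2))$, which is still a finite and computable set, so the decidability claim is unaffected.
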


The remainder of the section is devoted to the proof of Theorem~\ref{thm:decidable}. Strictly speaking, our argument is complete only in the case where $\JI\subseteq \V,$ but see the remark at the end of Section~\ref{section:categories}, which implies that our proof is valid in all cases. 

\begin{proof}[Proof of Theorem \ref{thm:decidable}] If all the base monoids of $\ker_K\phi$ are in \V, then $M\in\V\mathbin{**}\J$ by Theorem \ref{thm:localglobal}. For the converse, we suppose $M\in\V\mathbin{**}\J.$ Then, again by Theorem \ref{thm:localglobal}, there exists $K'$ such that every base monoid of $\ker_{K'}\phi$ is in \V. 

 If $K'\leq K,$ the desired  result follows directly from Lemma~{lemma:basemonoids}.
 So we assume $K'>K.$ We will need the special properties of the morphisms $\psi_N$ given in Lemma \ref{lem:sublemma}. Let $(P,S)$ be an object of $\ker_K\phi.$ Set
 $$B=\{b\in A: P\cdot \psi_K(b)=P,\psi_K(b)\cdot S=S\}.$$
 The base monoid $M_{P,S}$ is generated by the arrows $(P,S)\stackrel{b}{\rightarrow}(P,S),$ and consequently we obtain a homomorphism $\rho$ from $B^*$ onto $M_{P,S}.$
 
 Let $z,y\in A^*$ be such that $\psi_K(z)=P,$ $\psi_K(y)=S.$  By Lemma \ref{lem:sublemma}, there are factorizations
 $$z=z_0z_1\cdots z_{|M|},
 y=y_{|M|}\cdots y_1y_0,$$ 
 where for all $1\leq i,j\leq |M|,$ $B\subseteq\alpha(z_i)=\alpha(z_j)$ and $B\subseteq\alpha(y_i)=\alpha(y_j).$
 
 We also have, for some $1\leq i<j\leq |M|,$
 $$\phi(z_1\cdots z_i)=\phi(z_1\cdots z_iz_{i+1}\cdots z_j),$$
 so we can insert arbitrarily many copies of $z_{i+1}\cdots z_j$ into $z$ (and likewise into $y$) without changing the value of the word under $\phi.$  Let us do this in such a manner that the resulting words 
 $$z'=z_0z_1\cdots z_r, y'=y_r\cdots y_1y_0$$
 contain $r\geq K'$ of the factors $z_k$ and $y_k$ respectively.  Let $P'=\psi_{K'}(z')$ and $S'=\psi_{K'}(y').$  It follows that $P'\cdot\psi_{K'}(b)=P'$ for every $b\in B,$ for if $c_1\cdots c_{K'-1}b$ is a subword of $z'b$ that uses the final letter of $z'b,$ then we can find an occurrence of $c_1\cdots c_{K'-1}$ contained entirely in the factors $z_0\cdots z_{r-1}$ of $z',$ and consequently an occurrence of $c_1\cdots c_{K'-1}b$ in $z'.$  Likewise $S'=\psi_{K'}(b)\cdot S'$ for every $b\in B.$  It follows that $(P',S')\stackrel{b}{\rightarrow}(P',S')$ represents an arrow of $M_{P',S'}.$
 
 We take {\it all} the objects $(P',S')$ that arise in this manner from representatives $z,y$ of $(P,S)$ and form the direct product $N'$ of the resulting $M_{P',S'}.$  For each $b\in B$ we take the element of $N'$ whose value in each component $M_{P',S'}$ is the arrow represented by $(P',S')\stackrel{b}{\rightarrow}(P',S'),$ and we form the submonoid $N$ of $N'$ generated by these elements.  We thus have a homomorphism $\sigma$ from $B^*$ onto $N.$  
 
 Let $w,w'\in B^*$ with $\sigma(w)=\sigma(w').$  We claim $\rho(w)=\rho(w').$  Indeed, let $z,y\in A^*$ with $\psi_K(z)=P$ and $\psi_K(y)=S.$  We obtain $z',y'$ from $z,y$ as above, and set $P'=\psi_{K'}(z),$ $S'=\psi_{K'}(y).$ $(P',S')$ is one of the objects used to build the direct product $N',$ so $\sigma(w)=\sigma(w')$ implies in particular that 
 $(P',S')\stackrel{w,w'}{\longlongrightarrow}(P',S')$ are equivalent arrows.  Thus
 $$\phi(zwy)=\phi(z'wy')
 = \phi(z'w'y')
 = \phi(zwy).
$$
 
 Thus $(P,S)\stackrel{w,w'}{\longlongrightarrow}(P,S)$ are equivalent arrows, so $\rho(w)=\rho(w'),$ as claimed.  Thus $M_{P,S}$ is a homomorphic image of $N,$  thus a divisor of $N'$ and consequently in $\V.$  This shows that all base monoids of $\ker_K\phi$ are in \V, as required.
\end{proof}

\section{Conclusion}

We have shown that the identities given in~\cite{St11} indeed characterize $\V_n$. There is, or course, a one-sided semidirect product, which has been much more thoroughly studied.  Our results, and their proofs, can all be adapted to one-sided products, with little modification.  In this case, the hierarchy collapses at the second level: ${\bf J}*{\bf J}*{\bf J}={\bf J}*{\bf J}.$  (This fact is not new.  It has long been known that the closure of {\bf J} under one-sided products is the pseudovariety {\bf R} of ${\mathcal{R}}$-trivial monoids, and Brzozowski and Fich~\cite{BrFi80} showed ${\bf R}={\bf J}_1*{\bf J}.$)
 
In their Paper Kufleitner and Weil~\cite{KuWe12} give a completely different characterization of the levels of $\FO^2[<]$. It would be nice to see direct connection between these two approaches.

\noindent{\bf Acknowledgements}
We are grateful to Manfred Kufleitner, Benjamin Steinberg, and Pascal Weil for detailed discussions of this work.

\end{document}